\newcommand{\N}{\mathbb N}
\renewcommand{\A}{\mathbb A}
\newcommand{\F}{\mathbb F}
\newcommand{\abs}[1]{\lvert #1 \rvert}
\newcommand{\abrace}[1]{\left< #1 \right> }
\newcommand{\cbrace}[1]{\left\{ #1 \right\} }
\newcommand{\sqbrace}[1]{\left[ #1 \right] }
\newcommand{\ceil}[1]{\lceil #1 \rceil}
\newcommand{\comment}[1]{} 
\renewcommand{\L}{{\boldsymbol L}}
\newcommand{\f}{{\boldsymbol f}}	
\newcommand{\w}{\mathrm{w}}
\renewcommand{\M}{M}
\newcommand{\Ml}{M_\ell}
\newcommand{\T}{T}
\newcommand{\Tl}{T_\ell}
\newcommand{\Tls}{T_{\ell,k}}
\newcommand{\e}{{\boldsymbol e}} 
\newcommand{\x}{{\boldsymbol x}} 
\newcommand{\y}{{\boldsymbol y}} 
\newcommand{\z}{{\boldsymbol z}} 
\renewcommand{\a}{{\boldsymbol a}}  
\renewcommand{\b}{{\boldsymbol b}}  
\newcommand{\s}{{\boldsymbol s}}  
\newcommand{\h}{{\boldsymbol h}}  
\newcommand{\bfgamma}{{\boldsymbol \gamma}}
\renewcommand{\v}{{\boldsymbol v}}  
\newcommand{\ROABP}{{\rm ROABP}}
\newcommand{\xa}{\x^{\a}}  
\newcommand{\ya}{\y^{\a}}  
\newcommand{\xb}{\x^{\b}}  
\newcommand{\zkb}{{\z_k^{\b}}}  
\newcommand{\yka}{{\y_k^{\a}}}  
\newcommand{\mydot}{\boldsymbol{\cdot}} 
\newcommand{\row}[2]{#1(#2,\mydot)}           
\newcommand{\column}[2]{#1(\mydot,#2)}           
\newcommand{\entry}[3]{#1(#2,#3)}           
\newcommand{\lis}[3]{{#1}_1 #2 {#1}_2 #2\dots #2 {#1}_{#3}}
\newcommand{\wt}[1]{(2w)^{2^{#1}}}
\DeclareMathOperator{\supp}{supp}
\DeclareMathOperator{\max1}{max}
\DeclareMathOperator{\Span}{span}
\DeclareMathOperator{\coeff}{coeff}
\DeclareMathOperator{\rank}{rank}
\DeclareMathOperator{\lc}{lc}
\DeclareMathOperator{\spanning}{span}
\DeclareMathOperator{\depending}{depend}
\newcommand{\coeffset}[3]{{#1}_{(#2,#3)}} 
\newtheorem{theorem}{Theorem}[section]
\newtheorem{lemma}[theorem]{Lemma}
\newtheorem{corollary}[theorem]{Corollary}
\newtheorem{definition}[theorem]{Definition}
\newtheorem{claim}[theorem]{Claim}
\title{Deterministic Identity Testing for Sum of Read-Once Oblivious Arithmetic Branching Programs}
\author[1]{Rohit Gurjar\thanks{\texttt{rgurjar@cse.iitk.ac.in}, 
supported by TCS PhD research fellowship}}
\author[1]{Arpita Korwar\thanks{\texttt{arpk@cse.iitk.ac.in}}}
\author[1]{Nitin Saxena\thanks{\texttt{nitin@cse.iitk.ac.in}, supported by DST-SERB}}
\author[2]{Thomas Thierauf\thanks{\texttt{thomas.thierauf@htw-aalen.de}, supported by DFG grant TH 472/4-1}}
\affil[1]{Department of Computer Science and Engineering, IIT Kanpur, India
}
\affil[2]{Aalen University, Germany
}
\date{\vspace{-1.5cm}}
\begin{document}

\maketitle

\begin{abstract}
A {\em read-once oblivious arithmetic branching program (ROABP)\/} is an arithmetic branching program (ABP)
where each variable occurs in at most one layer. 
We give the first polynomial time whitebox identity test for a polynomial computed by a sum of constantly many
ROABPs.
We also give a corresponding blackbox algorithm with quasi-polynomial time complexity~$n^{O(\log n)}$.
In both the cases, our time complexity is double exponential in the number of ROABPs.

ROABPs are a generalization of set-multilinear depth-$3$ circuits.
The prior results for the sum of constantly many set-multilinear depth-$3$ circuits
were only slightly better than brute-force, i.e.\ exponential-time.

Our techniques are a new interplay of three concepts for ROABP:
low evaluation dimension, basis isolating weight assignment and low-support rank concentration.
We relate basis isolation to rank concentration and extend it to a sum of two ROABPs using evaluation dimension
(or partial derivatives).
\end{abstract}


\section{Introduction}
Polynomial Identity Testing (PIT) is the problem of testing whether
a given $n$-variate polynomial is identically zero or not.
The input to the PIT problem
may be in the form of arithmetic circuits or
arithmetic branching programs (ABP).
They are the arithmetic analogues of boolean circuits and
boolean branching programs, respectively.
It is well known that PIT can be solved in randomized polynomial time, see e.g.~\cite{Sch80}.
The randomized algorithm just evaluates the polynomial at random points;
thus, it is a \emph{blackbox\/} algorithm.
In contrast,
an algorithm is a \emph{whitebox\/} algorithm if it 
looks inside the given circuit or branching program.
We consider both, whitebox and blackbox algorithms.

Since all problems with randomized polynomial-time solutions
are conjectured to have deterministic polynomial-time algorithms,
we expect that such an algorithm exists for PIT.
It is also known that any sub-exponential time algorithm for PIT
implies a lower bound~\cite{KI03, Agr05}.
See also the surveys~\cite{Sax09, Sax14, SY10}.

An efficient deterministic solution for PIT is known only for very restricted
input models, for example, sparse polynomials~\cite{BOT88,KS01}, 
constant fan-in depth-$3$ ($\Sigma \Pi \Sigma$) circuits \cite{DS07, KS07, KS09, KS11, SS11, SS12},
set-multilinear circuits \cite{RS05, FS12, ASS13}, read-once oblivious ABP (ROABP) \cite{RS05, FS13,FSS14, AGKS14}. 
This lack of progress is not surprising:
Gupta~et~al.~\cite{GKKS13} showed
that a polynomial time test for \mbox{depth-$3$} circuits would imply a sub-exponential time 
test for general circuits. 
For now, even a sub-exponential solution for depth-$3$ circuits seems elusive. 
However, an efficient test for depth-$3$ multilinear circuits looks within reach 
as a lower bound against this class of circuits is already known~\cite{RY09}.
A circuit is called \emph{multilinear\/} if all its gates compute a multilinear polynomial,
i.e.\ polynomials such that the maximum degree of any variable is one.

A depth-$3$ multilinear circuit is called \emph{set-multilinear\/}
if all the product gates in it induce the same partition on the set of variables.
It is easy to see that a depth-$3$ multilinear circuit is a sum of polynomially many 
set-multilinear circuits.
Hence,
a natural first step to attack depth-$3$ multilinear circuit
is to find an efficient test for the sum of two set-multilinear polynomials. 
Before this work, the only non-trivial test known for sum of two set-multilinear circuits
was a sub-exponential whitebox
algorithm by Agrawal~et~al.~\cite{AGKS14}.
Subsequently, a sub-exponential time blackbox test was also given for 
depth-$3$ multilinear circuits \cite{OSV14}.
Our results imply the first polynomial-time whitebox algorithm,
and the first quasi-polynomial-time blackbox algorithm,
for the sum of two set-multilinear circuits.

In this paper,
we deal with ROABPs,
a model which subsumes set-multilinear circuits;
see for example~\cite[Lemma 14]{AGKS14}.
A read-once oblivious ABP (ROABP) is an arithmetic branching program,
where each variable occurs in at most one layer. 
There has been a long chain of work on identity testing for ROABP,
see the thesis of Michael Forbes~\cite{Forbes14} for an excellent overview.
In 2005, Raz and Shpilka~\cite{RS05} gave a polynomial-time whitebox
test for ROABP.
Then, Forbes and Shpilka~\cite{FS13} gave an $s^{O(\log n)}$-time blackbox algorithm
for ROABP with known variable order, where $s$ is the size of the ROABP and $n$ is number of variables.
This was followed by a complete blackbox test~\cite{FSS14} that took $s^{O(d\log^2 s)}$ steps,
where $d$ is the syntactic degree bound of any variable.
This was further improved by Agrawal~et~al.~\cite{AGKS14} to $s^{O(\log n)}$ time.
They removed the exponential dependence on the degree~$d$.
Their test is based on the idea of {\em basis isolating weight assignment\/}.
Given a polynomial over an algebra, it
assigns weights to the variables, and naturally extends it to monomials, 
such that there is a unique minimum weight basis
among the coefficients of the polynomial.

In another work, Jansen et al.~\cite{JQS10} gave a blackbox test for a sum of constantly many
``ROABPs''. Their definition of ``ROABP'' is much weaker. They 
assume that a variable appears on at most one edge in the ABP.

We consider the sum of ROABPs.
Note that
there are polynomials~$P(\x)$ computed by the sum of two ROABPs 
such that any single ROABP that computes~$P(\x)$ has exponential size~\cite{NS14}.
Hence, the previous results on single ROABPs do not help here.
In Section~\ref{sec:whitebox} we show our first main result (Theorem \ref{thm:PITcROABPwb}):
\begin{quote}
{\em
PIT for the sum of constantly many {\rm ROABP}s is in polynomial time.
}
\end{quote}
The exact time bound we get for the PIT-algorithm is $(nd w^{2^c})^{O(c)}$,
where~$n$ is the number of variables,
$d$ is the degree bound of the variables,
$c$ is the number of ROABPs and
$w$ is their width.
Hence
our time bound is double exponential in~$c$,
but polynomial in~$ndw$.

Our algorithm uses the fact that the {\em evaluation dimension\/} of an ROABP
is equal to the width of the ROABP~\cite{Nis91, FS13a}.
Namely,
we consider a set of linear dependencies derived from partial evaluations
of the ROABPs
\footnote{Equivalently, we work with the dependencies of the partial derivatives.}.
We view identity testing of the sum of two ROABPs 
as testing the equivalence of two ROABPs.
Our idea is inspired from a similar result in the boolean case. 
Testing the equivalence of two ordered boolean branching programs (OBDD)
is in polynomial time~\cite{SW97}.
OBDDs too have a similar property of small evaluation dimension, except that
the notion of {\em linear dependence\/} becomes {\em equality\/} in the boolean setting.
Our equivalence test, for two ROABPs $A$ and $B$, 
takes linear dependencies among partial evaluations of $A$
and verifies them for the corresponding partial evaluations of $B$. 
As $B$ is an ROABP, the verification of these dependencies 
reduces to identity testing for a single ROABP.

In Section~\ref{sec:sumOfC}, we generalize this test to the sum of~$c$ ROABPs. 
There we take~$A$ as one ROABP 
and~$B$ as the sum of the remaining~$c-1$ ROABPs. 
In this case, the verification of the dependencies for~$B$ becomes
the question of identity testing of a sum of~$c-1$ ROABPs,
which we solve recursively.

The same idea can be applied to decide the equivalence of an OBDD with
the XOR of~$c-1$ OBDDs. We skip these details here as we
are mainly interested in the arithmetic case.


In Section~\ref{sec:blackbox},
we give an identity test for a sum of ROABPs in the blackbox setting.
That is, 
we are given blackbox access to a sum of ROABPs and \emph{not\/} to the individual ROABPs. 
Our main result here is as follows (Theorem~\ref{thm:blackbox}):

\begin{quote}
{\em
There is a blackbox PIT for  the sum of constantly many {\rm ROABP}s 
that works in quasi-polynomial time.
}
\end{quote}
The exact time bound we get for the PIT-algorithm is $(ndw)^{O(c\, 2^c\log(ndw))}$,
where~$n$ is the number of variables,
$d$ is the degree bound of the variables,
$c$ is the number of ROABPs and
$w$ is their width.
Hence
our time bound is double exponential in~$c$,
and quasi-polynomial in~$n,d,w$.

Here again, using the low evaluation dimension property, the question
is reduced to identity testing for a single ROABP. 
But, just a hitting-set for ROABP does not suffice here, we need 
an efficient shift of the variables which gives low-support concentration 
in any polynomial computed by an ROABP.
An $\ell$-concentration in a polynomial~$P(\x)$ means that all of its coefficients
are in the linear span of its  coefficients corresponding to monomials with support~$< \ell$. 
Essentially we show that a shift, which achieves low-support concentration
for an ROABP of width~$w^{2^c}$, 
also works for a sum of~$c$  ROABPs (Lemma~\ref{lem:blackBoxSumOfC}).
This is surprising, because as mentioned above,
a sum of~$c$ ROABPs is not captured
by an ROABP with polynomially bounded width~\cite{NS14}.

A novel part of our proof is the idea that 
for a polynomial over a $k$-dimensional $\F$-algebra~$\A_k$,
a shift by a basis isolating weight assignment achieves low-support concentration.
To elaborate, let $\w \colon \x \to \N$ be a basis isolating weight assignment
for a polynomial $P(\x) \in \A_k[\x]$ then $P(\x + t^\w)$ has $O(\log k)$-concentration
over~$\F(t)$.
As Agrawal~et~al.~\cite{AGKS14} gave a basis isolating weight assignment for ROABPs, we can use it to 
get low-support concentration.
Forbes et al.~\cite{FSS14} had also achieved low-support concentration in ROABPs,
but with a higher cost.
Our concentration proof significantly differs from the older 
rank concentration proofs~\cite{ASS13,FSS14}, which
always assume \emph{distinct\/} weights for all the monomials or coefficients. 
Here, we only require that the weight of a coefficient is greater than 
the weight of the basis coefficients that it depends on.


\section{Preliminaries}
\label{sec:prelim}

\subsection{Notation}
Let $\x = (x_1,x_2, \dots, x_n)$ be a tuple of $n$ variables. 
For any $\a = (a_1, a_2, \dots, a_n ) \in \N^n$, 
we denote by $\xa$  the monomial $\prod_{i=1}^n x_i^{a_i}$. 
The \emph{support size\/} of a monomial~$\xa$ is given by 
$\supp(\a) = \abs{\{ a_i \neq 0 \mid i \in [n] \}}$.

Let $\F$ be some field.
Let~$A(\x)$ be a polynomial over~$\F$ in~$n$ variables.
A polynomial $A(\x)$ is said to have \emph{individual degree} $d$, 
if the degree of each variable is bounded by $d$ for each monomial in $A(\x)$.
When~$A(\x)$ has individual degree~$d$, then the exponent~$\a$ of any monomial~$\xa$ of~$A(\x)$
is in the set
\[
 \M = \{0,1, \dots, d\}^n \,.
\]
By~$\coeff_{A}(\xa) \in \F$ we denote the coefficient of the monomial~$\xa$ in~$A(\x)$.
Hence, we can write
\[
 A(\x) = \sum_{\a \in \M} \coeff_{A}(\xa)\, \xa \,.
\]
The {\em sparsity\/} of polynomial~$A(\x)$ is the number of nonzero coefficients~$\coeff_{A}(\xa)$.

We also consider {\em matrix polynomials\/}
where the coefficients~$\coeff_{A}(\xa)$ are $w \times w$ matrices, for some~$w$.
In an abstract setting,
these are polynomials over a $w^2$-dimensional $\F$-algebra~$\A$.
Recall that an $\F$-algebra is a vector space over~$\F$ with a
multiplication which is bilinear and associative,
i.e.\ $\A$ is a ring.
The {\em coefficient space\/} is then defined as the span of all coefficients
of~$A$,
i.e., 
$\Span_{\F}\{\coeff_A(\x^a) \mid \a \in \M \}$.

Consider a partition of the variables~$\x$ into two parts~$\y$ and~$\z$, with $\abs{\y}=k$.
A polynomial~$A(\x)$ can be viewed as a polynomial in  variables~$\y$, 
where the coefficients are polynomials in~$\F[\z]$.
For monomial~$\ya$,
let us denote the coefficient of~$\ya$ in~$A(\x)$ by $\coeffset{A}{\y}{\a} \in \F[\z]$.
For example, in the polynomial $A(\x) = x_1 + x_1x_2 + {x_1}^2$, we have
$\coeffset{A}{x_1}{1} = 1 + x_2$, whereas
$\coeff_{A}(x_1) = 1$.
Observe that $\coeff_{A}(\ya)$ is the constant term in $\coeffset{A}{\y}{\a}$.

Thus, $A(\x)$ can be written as 
\begin{equation}\label{eq:C_ya}
A(\x) = \sum_{\a \in \{0,1, \dots,d\}^k} \coeffset{A}{\y}{\a} \, \ya \,.
\end{equation}
The coefficient $\coeffset{A}{\y}{\a}$ is also sometimes expressed 
in the literature as a partial derivative~$\frac{\partial A}{\partial \ya} $ 
evaluated at $\y = {\boldsymbol 0}$
(and multiplied by an appropriate constant), see~\cite[Section 6]{FS13a}.

For a set of polynomials~$\mathcal{P}$,
we define their $\F$-$\Span$ as
\[
\Span_{\F} \mathcal{P} = \cbrace{\sum_{A \in \mathcal{P}}
\alpha_A A \mid \alpha_A \in \F \text{ for all } A \in \mathcal{P}}.
\]
The set of polynomials~$\mathcal{P}$ is said to be $\F$-{\em linearly independent\/}
if $\sum_{A \in \mathcal{P}} \alpha_A A = 0$ holds only for $\alpha_A = 0$, 
for all~$A \in \mathcal{P}$.
The \emph{dimension\/}~$\dim_\F \mathcal{P}$ of~$\mathcal{P}$ is
the cardinality of the largest $\F$-linearly independent subset of~$\mathcal{P}$.

For a matrix~$R$, we denote by~$R(i,\mydot)$ and~$R(\mydot, i)$ 
the $i$-th row and the $i$-th column of~$R$, respectively.
For any $a \in \F^{k \times k'}, b \in \F^{\ell \times \ell'}$, 
the tensor product of~$a$ and~$b$ is denoted by $a\otimes b$.
The inner product is denoted by~$\abrace{a,b}$.
We abuse this notation slightly: 
for any $a, R \in \F^{w \times w}$,
let $\abrace{a,R} = \sum_{i = 1}^w\sum_{j = 1}^w a_{i j} R_{i j}$. 


\subsection{Arithmetic branching programs}
\label{sec:abp}
An {\em arithmetic branching program\/} (ABP) is a directed graph 
with $\ell+1$ layers of vertices $(V_0,V_1, \dots, V_{\ell})$.
The layers $V_0$ and $V_\ell$ each contain only one vertex, 
the {\em start node\/}~$v_{0}$ and 
the {\em end node\/}~$v_{\ell}$, respectively.
The edges are 
only going from the vertices in the layer $V_{i-1}$ to the vertices in the layer $V_i$, 
for any $i \in [d]$.
All the edges in the graph have weights from~$\F[\x]$,
for some field~$\F$. 
The {\em length\/} of an ABP is the length of a longest path in the ABP, i.e.~$\ell$.
An ABP has {\em width\/}~$w$, 
if $\abs{V_i} \leq w$ for all $1 \leq i \leq \ell-1$.

For an edge~$e$, let us denote its weight by~$W(e)$.
For a path~$p$,
its weight~$W(p)$ is defined to be the product of weights of all the edges
in it,
\[ W(p) = \prod_{e \in p} W(e).\]
The {\em polynomial $A(\x)$ computed by the ABP\/}
is the sum of the weights of all the paths from $v_{0}$ to $v_{\ell}$,
\[
A(\x) = \sum_{p \text{ path } v_{0} \leadsto v_{\ell}} W(p).
\]

Let the set of nodes in $V_i$ be $\{v_{i,j} \mid j \in [w]\}$.
The branching program can alternately be represented by a matrix product
$\prod_{i=1}^{\ell} D_i  $,
where $D_1 \in \F[\x]^{1\times w}$, 
$D_i \in \F[\x]^{w \times w}$ for $2 \leq i \leq \ell-1$,
and $D_{\ell} \in \F[\x]^{w\times 1}$ 
such that 
\begin{eqnarray*}
D_1(j) &=& W(v_0,v_{1,j}),\; \text{ for } 1 \leq j \leq w,\\
D_i(j, k) &=& W(v_{i-1,j},v_{i,k}),\;  \text{ for } 1 \leq j,k \leq w \text{ and } 2 \leq i \leq n-1,\\
D_{\ell}(k) &=& W(v_{\ell-1,k},v_{\ell}),\;  \text{ for } 1 \leq k \leq w.
\end{eqnarray*}
Here we use the convention that $W(u,v) = 0$ if $(u,v)$ is not an edge in the ABP.


\subsection{Read-once oblivious arithmetic branching programs}
\label{subsec:roabpCharacterization}

An ABP is called a {\em read-once oblivious ABP (ROABP)}
if the edge weights in every layer are univariate polynomials in the same variable,
and every variable occurs in at most one layer.
Hence,
the length of an ROABP is~$n$, the number of variables.
The entries in the matrix~$D_{i}$ defined above come from~$\F[x_{\pi(i)}]$,
for all $i \in [n]$, where $\pi$ is a permutation on the set~$[n]$.
The order $(x_{\pi(1)}, x_{\pi(2)}, \dots, x_{\pi(n)})$ is said 
to be the \emph{variable order\/} of the ROABP.

We will view~$D_{i}$ as a polynomial in the variable~$x_{\pi(i)}$,
whose coefficients are $w$-dimensional vectors or matrices.
Namely,
for an exponent $\a = (a_1, a_2, \dots, a_n)$,
the coefficient of
\begin{itemize}
\item
$x_{\pi(1)}^{a_{\pi(1)}}$ in~$D_1(x_{\pi(1)})$ is the row vector
$\coeff_{D_1}(x_{\pi(1)}^{a_{\pi(1)}}) \in \F^{1 \times w}$,
\item 
$x_{\pi(i)}^{a_{\pi(i)}}$ in~$D_i(x_{\pi(i)})$ is the matrix
$\coeff_{D_i}(x_{\pi(i)}^{a_{\pi(i)}}) \in \F^{w \times w}$, for $i = 2,3, \dots, n-1$, and
\item
$x_{\pi(n)}^{a_{\pi(n)}}$ in~$D_n(x_{\pi(n)})$ is the vector
$\coeff_{D_n}(x_{\pi(n)}^{a_{\pi(n)}}) \in \F^{w \times 1}$.
\end{itemize}

The read once property gives us an easy way to express the coefficients of the
polynomial~$A(\x)$ computed by an ROABP.
\begin{lemma}
\label{lem:coeffprod}
For a polynomial $A(\x) = D_1(x_{\pi(1)}) D_2(x_{\pi(2)}) \cdots D_n(x_{\pi(n)})$ computed by an {\rm ROABP},
we have 
\begin{equation}
 \coeff_{A}(\xa) = \prod_{i=1}^n \coeff_{D_i}(x_{\pi(i)}^{a_{\pi(i)}})~~ \in \F \,.
 \label{eq:coeff}
\end{equation}
\end{lemma}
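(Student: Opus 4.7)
The plan is to prove the identity by direct expansion, using the read-once property to show that each monomial $\xa$ in the product $D_1(x_{\pi(1)}) \cdots D_n(x_{\pi(n)})$ arises in exactly one way.

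First, for notational convenience I would assume $\pi$ is the identity permutation (the general case follows by relabeling variables). Then I would write each factor in terms of its coefficient expansion:
\[
D_i(x_i) = \sum_{a_i = 0}^{d} \coeff_{D_i}(x_i^{a_i})\, x_i^{a_i},
\]
where each $\coeff_{D_i}(x_i^{a_i})$ is a scalar in $\F$ (or a row/column/matrix of scalars, depending on whether $i=1$, $i=n$, or $1<i<n$). Substituting into $A(\x)$ and using distributivity of the (in general non-commutative) matrix product over the sum gives
\[
A(\x) \;=\; \prod_{i=1}^n D_i(x_i) \;=\; \sum_{\a \in \M} \left(\prod_{i=1}^n \coeff_{D_i}(x_i^{a_i})\right) \prod_{i=1}^n x_i^{a_i} \;=\; \sum_{\a \in \M} \left(\prod_{i=1}^n \coeff_{D_i}(x_i^{a_i})\right) \xa.
\]

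The crucial point — where the read-once property is used — is that the variable $x_i$ occurs only in the $i$-th factor $D_i$. Thus, when we distribute the product of sums, each resulting monomial $\prod_{i=1}^n x_i^{a_i} = \xa$ is obtained in exactly one way, namely by choosing the term $\coeff_{D_i}(x_i^{a_i})\, x_i^{a_i}$ from each $D_i$. There are no cross-terms to collect, so the coefficient of $\xa$ is literally $\prod_{i=1}^n \coeff_{D_i}(x_i^{a_i})$. Comparing coefficients with the expansion $A(\x) = \sum_{\a} \coeff_A(\xa)\, \xa$ yields the claimed identity.

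There is no real obstacle here; the lemma is essentially a formal consequence of the read-once structure and distributivity. The only subtlety worth flagging is that the factors $\coeff_{D_i}(x_i^{a_i})$ are matrices, so the order of multiplication in $\prod_{i=1}^n \coeff_{D_i}(x_i^{a_i})$ matters and must be taken in increasing order of $i$ (from the row-vector $\coeff_{D_1}$, through the $w\times w$ matrices, to the column-vector $\coeff_{D_n}$). This ordering is inherited from the ordered product $D_1 D_2 \cdots D_n$ defining the ABP, and the dimensions are compatible so the final product is a scalar in $\F$, as asserted.
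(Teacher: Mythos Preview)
Your proof is correct. The paper actually states this lemma without proof, treating it as an immediate consequence of the read-once structure; your argument by expanding each $D_i$ and distributing is precisely the standard justification, and you correctly flag the one point worth noting (that the matrix order in the product is fixed by the layer order and yields a scalar).
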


We also consider matrix polynomials computed by an ROABP.
A matrix polynomial $A(\x) \in F^{w \times w}[\x]$ is said to be computed by an ROABP if
$A = D_1 D_2 \cdots D_n$, where $D_i \in F^{w \times w}[x_{\pi(i)}]$
for $i = 1, 2, \dots, n$ and some permutation~$\pi$ on~$[n]$.
Similarly, a vector polynomial $A(\x) \in F^{1 \times w}[\x]$
is said to be computed by an ROABP if 
$A = D_1 D_2 \cdots D_n$, where $D_1 \in F^{1 \times w}[x_{\pi(1)}]$ 
and $D_i \in F^{w \times w}[x_{\pi(i)}]$ for $i =  2, \dots, n$.
Usually, we will assume that an ROABP computes a polynomial in $\F[\x]$,
unless mentioned otherwise.

Let~$A(\x)$ be the polynomial computed by an ROABP and
let~$\y$ and~$\z$ be a partition of the variables~$\x$ 
such that~$\y$ is a \emph{prefix} of the variable order of the ROABP.
Recall from equation~(\ref{eq:C_ya}) that 
$\coeffset{A}{\y}{\a} \in \F[\z]$ is the coefficient of monomial~$\ya$ in~$A(\x)$.
Nisan~\cite{Nis91} showed that for every prefix~$\y$, 
the dimension of the set of coefficient polynomials~$\coeffset{A}{\y}{\a}$ 
is bounded by the width of the 
ROABP\footnote{Nisan~\cite{Nis91} showed it for non-commutative ABP, but the same proof works for ROABP.}.
This holds
in spite of the fact that the number of these polynomials is large.

\begin{lemma}[\cite{Nis91}, Prefix $\y$]
\label{lem:ROABPdim}
Let $A(\x)$ be a polynomial of individual degree~$d$, computed by an ROABP of width~$w$  with
variable order $(x_1, x_2, \dots, x_n)$. 
Let $k \leq n$ and $\y = (x_1, x_2, \dots, x_k)$ be the prefix of length~$k$ of~$\x$.
Then
$\dim_\F \{ \coeffset{A}{\y}{\a} \mid \a \in \{0,1,\dots, d\}^k \} \leq w.$
\end{lemma}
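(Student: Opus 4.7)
The plan is to exploit the tensor-like factorization of an ROABP across the prefix/suffix split at position~$k$, and show that every $\y$-coefficient of $A$ lies in the $\F$-span of the same $w$ polynomials in $\z$.

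Write $A(\x)$ in its matrix form $A = D_1(x_1) D_2(x_2) \cdots D_n(x_n)$. Since the variables in $\y = (x_1, \dots, x_k)$ and $\z = (x_{k+1}, \dots, x_n)$ appear in disjoint blocks of matrices, I would factor the product as $A(\x) = L(\y) \, R(\z)$, where
\[
L(\y) = D_1(x_1) \cdots D_k(x_k) \in \F[\y]^{1 \times w}, \qquad R(\z) = D_{k+1}(x_{k+1}) \cdots D_n(x_n) \in \F[\z]^{w \times 1}.
\]
Denote the entries by $L(\y) = (L_1(\y), \dots, L_w(\y))$ and $R(\z) = (R_1(\z), \dots, R_w(\z))^T$, so that $A(\x) = \sum_{j=1}^{w} L_j(\y)\, R_j(\z)$.

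Now I would extract the coefficient of $\ya$ on both sides. Since the $R_j(\z)$ do not depend on $\y$, the operation $\coeffset{\,\cdot\,}{\y}{\a}$ passes through the sum and leaves the $R_j$'s intact, giving
\[
\coeffset{A}{\y}{\a} = \sum_{j=1}^{w} \coeff_{L_j}(\ya) \, R_j(\z),
\]
where each scalar $\coeff_{L_j}(\ya) \in \F$. Hence every coefficient polynomial $\coeffset{A}{\y}{\a}$ lies in $\Span_\F \{R_1(\z), \dots, R_w(\z)\}$, a space of dimension at most~$w$. Varying $\a$ over $\{0,1,\dots,d\}^k$ preserves this containment, which gives the claimed bound.

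The argument has no real obstacle; the only point worth being careful about is that the linear combination is over $\F$ (not $\F[\z]$), but this is immediate because the scalars $\coeff_{L_j}(\ya)$ arise as coefficients of a polynomial in $\y$ whose ambient ring is~$\F[\y]$, so they live in~$\F$. The key structural input is just that in an ROABP with variable order $(x_1, \dots, x_n)$, a prefix/suffix cut at position~$k$ yields a clean matrix-product factorization $A = L(\y) R(\z)$ of width~$w$; everything else is a one-line coefficient extraction.
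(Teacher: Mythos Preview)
Your proof is correct and essentially identical to the paper's own proof: both factor $A = P(\y)Q(\z)$ (you call these $L$ and $R$) across the prefix/suffix cut, expand the matrix product as $\sum_j P_j(\y)Q_j(\z)$, and then extract the $\ya$-coefficient to exhibit each $\coeffset{A}{\y}{\a}$ as an $\F$-linear combination of the $w$ entries of the suffix vector. The only difference is notation.
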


\begin{proof}
Let $A(\x) = D_1(x_1)\, D_2(x_2)\, \cdots\, D_n(x_n)$, 
where $D_1 \in \F^{1 \times w}[x_1]$, $D_n \in \F^{w \times 1}[x_n]$
and $D_i \in \F^{w \times w}[x_i]$, for $2 \leq i \leq n-1$.
Let $\z = (x_{k+1}, x_{k+2}, \dots, x_n)$ be the remaining variables of~$\x$. 
Define $P(\y) = D_1 D_2 \cdots D_k$ and $Q(\z) = D_{k+1} D_{k+2} \cdots D_n$.
Then~$P$ and~$Q$ are vectors of length~$w$,
\begin{align*}
 P(\y) &= [P_1(\y) \; P_2(\y) \; \cdots \; P_w(\y)]\\
Q(\z) &= [Q_1(\z) \; Q_2(\z) \; \cdots \; Q_w(\z)]^T
\end{align*}
where $P_i(\y) \in \F[\y]$ and $Q_i(\z) \in \F[\z]$, for $1 \leq i \leq w$,
and we have  $A(\x) = P(\y)\, Q(\z)$. 

We get the following generalization of equation~(\ref{eq:coeff}):
for any $\a \in \{0,1, \dots, d\}^k$, 
the coefficient~$\coeffset{A}{\y}{\a} \in \F[\z]$  of monomial~$\ya$
can be written as
\begin{equation}
 \coeffset{A}{\y}{\a} = \sum_{i=1}^w \coeff_{P_i}(\ya) \, Q_i(\z).
 \label{eq:coeffset}
\end{equation}
That is,
every~$\coeffset{A}{\y}{\a}$ is in the $\F$-span 
of the polynomials $Q_1, Q_2, \dots,Q_w$.
Hence,
the claim follows.
\end{proof}

Observe that equation~(\ref{eq:coeffset}) tells us that
the polynomials $\coeffset{A}{\y}{\a}$ can also be computed by an ROABP of width~$w$:
by equation~(\ref{eq:coeff}),
we have
$\coeff_{P_i}(\ya) = \prod_{x_i \in \y} \coeff_{D_i}(x_i^{a_i})$.
Hence,
in the ROABP for~$A$ we simply have to replace the matrices~$D_i$ which belong to~$P$
by the coefficient matrices~$\coeff_{D_i}(x_i^{a_i})$.
Here,
$\y$ is a prefix of~$\x$.
But this is not necessary for the construction to work.
The variables in~$\y$ can be arbitrarily distributed in~$\x$.
We summarize the observation in the following lemma.

\begin{lemma}[Arbitrary $\y$]
\label{lem:pdROABP}
Let $A(\x)$ be a polynomial of individual degree~$d$, computed by an ROABP of width~$w$
and $\y = (x_{i_1}, x_{i_2}, \dots, x_{i_k})$ be any $k$ variables of~$x$.
Then the polynomial $\coeffset{A}{\y}{\a}$ can be computed by an ROABP of width~$w$,
for every $\a \in \{0,1,\dots,d\}^k$. 
Moreover,
all these ROABPs have the same variable order,
inherited from the order of the ROABP for~$A$.
\end{lemma}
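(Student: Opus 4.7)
The plan is to extend the construction already sketched for prefixes (Lemma~\ref{lem:ROABPdim}) to the case where $\y$ is an \emph{arbitrary} subset of the variables, by exploiting the fact that each factor $D_i$ in the matrix-product representation of the ROABP depends on a \emph{single distinct} variable. This disjointness makes coefficient extraction compatible with the matrix product.

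First I would fix notation: write $A(\x) = D_1(x_{\pi(1)})\,D_2(x_{\pi(2)})\,\cdots\,D_n(x_{\pi(n)})$ as in Section~\ref{subsec:roabpCharacterization}, and let $S = \{ j \in [n] : x_{\pi(j)} \in \y\}$ be the set of layer indices corresponding to variables of~$\y$. By distributing the product, one has
\[
A(\x) \;=\; \sum_{\b \in \M} \Bigl( \prod_{j=1}^n \coeff_{D_j}(x_{\pi(j)}^{b_{\pi(j)}}) \Bigr)\, \x^{\b},
\]
which is just a restatement of Lemma~\ref{lem:coeffprod}. Since the variables in different layers are disjoint, to extract $\coeffset{A}{\y}{\a}$ I only need to fix $b_{\pi(j)} = a_{\pi(j)}$ for $j \in S$ and sum over the remaining exponents. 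The key step is then to recognize that this partial extraction factors back into a matrix product:
\[
\coeffset{A}{\y}{\a} \;=\; \widetilde D_1 \,\widetilde D_2 \,\cdots\, \widetilde D_n,
\]
where $\widetilde D_j := \coeff_{D_j}(x_{\pi(j)}^{a_{\pi(j)}}) \in \F^{w \times w}$ is a \emph{constant} matrix for $j \in S$, and $\widetilde D_j := D_j(x_{\pi(j)})$ is unchanged for $j \notin S$.

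Finally, I would interpret the above product as an ROABP. Each $\widetilde D_j$ is either a constant or a univariate polynomial in a single variable of~$\z = \x \setminus \y$, all appearing in the original order $\pi$; by merging each constant layer into the preceding (or following) polynomial layer, we obtain an ROABP over $\z$ with width at most~$w$ and with variable order inherited from~$\pi$. Boundary layers require only minor care: $\widetilde D_1$ is a row vector and $\widetilde D_n$ is a column vector of appropriate length, and if the first or last layer happens to be constant it is absorbed into the neighbouring layer as usual.

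The only genuine subtlety is the bookkeeping ensuring that the replacement of variable-dependent $D_j$ by constants is a valid ROABP operation (i.e., that widths and the prescribed order are preserved), but this is immediate from the definition. There is no conceptual obstacle: the entire argument rests on the fact that coefficient extraction in disjoint variables is linear and commutes with the matrix product, which is already implicit in the proof of Lemma~\ref{lem:ROABPdim}.
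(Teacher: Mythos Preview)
Your proposal is correct and follows exactly the approach the paper sketches in the paragraph preceding the lemma: replace each matrix $D_j$ with $j \in S$ by the constant coefficient matrix $\coeff_{D_j}(x_{\pi(j)}^{a_{\pi(j)}})$ and leave the remaining matrices unchanged. The paper leaves the merging of constant layers implicit, but your explicit treatment of it is fine and changes nothing substantive.
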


For a general polynomial, 
the dimension considered in Lemma~\ref{lem:ROABPdim} can be exponentially large in~$n$.
We will next show the converse of Lemma~\ref{lem:ROABPdim}:
if this dimension is small for a polynomial then
there exists a small width ROABP for that polynomial.
Hence, this property characterizes the class of polynomials
computed by ROABPs.
Forbes et al.~\cite[Section~6]{FS13a} give a similar characterization in terms of 
evaluation dimension, for polynomials which can be 
computed by an ROABP, in any variable order. 
On the other hand, we work with a fixed variable order.

As a preparation to prove this characterization
we define a {\em characterizing set of dependencies\/}
of a polynomial~$A(\x)$ of individual degree~$d$, with respect to a variable order $(x_1,x_2, \dots, x_n)$. 
This set of dependencies will essentially give us an ROABP for~$A$ in
the variable order $(x_1,x_2, \dots, x_n)$.

\begin{definition}\label{def:dependencies}
Let $A(\x)$ be polynomial of individual degree~$d$, where $\x = (x_1, x_2, \dots, x_n)$.
For any $0 \leq k \leq n$ and $\y_k = (x_1,x_2, \dots, x_k)$, 
let
$$\dim_\F \{ \coeffset{A}{\y_k}{\a} \mid \a \in \{0,1,\dots, d\}^k \} \leq w ,$$
for some~$w$.

For $0 \leq k \leq n$,
we define the {\em spanning sets\/}
$\spanning_k(A)$ and the {\em dependency sets\/} $\depending_k(A)$ as subsets of $\{ 0,1, \dots, d \}^k$ as follows.

For $k = 0$, let
$\depending_0(A) = \emptyset$ and
$\spanning_0(A) = \{ \epsilon \}$,
where $\epsilon = (\,)$ denotes the empty tuple. 
For $k > 0$, let
\begin{itemize}
\item 
$\depending_k(A) = \{ (\a , j) \mid \a \in \spanning_{k-1}(A) \text{ and } 0 \leq j \leq d \}$,
i.e.\
$\depending_k(A)$ contains all possible extensions of the tuples in $\spanning_{k-1}(A)$.
\item 
$\spanning_k(A) \subseteq \depending_k(A)$
is any set of size $\le w$,
such that 
for any $\b \in \depending_k(A)$, the polynomial~$\coeffset{A}{\y_k}{\b}$
is in the span of $\{\coeffset{A}{\y_k}{\a} \mid \a \in \spanning_k(A)\}$.
\end{itemize}
The dependencies of the polynomials in $\{\coeffset{A}{\y_k}{\a} \mid \a \in \depending_k(A)\}$
over $\{\coeffset{A}{\y_k}{\a} \mid \a \in \spanning_k(A)\}$ are the 
{\em characterizing set of dependencies}.
\end{definition}
The definition of $\spanning_k(A)$ is not unique.
For our purpose, it does not matter which of the possibilities  we take,
we simply fix one of them.
We do {\em not\/} require that $\spanning_k(A)$ is of minimal size,
i.e.\ the polynomials associated with $\spanning_k(A)$ constitute a basis
for the  polynomials associated with $\depending_k(A)$.
This is because in the whitebox test in Section~\ref{sec:whitebox},
we will efficiently construct the sets $\spanning_k(A)$,
and there we cannot guarantee to obtain a basis.
We will see that it suffices to have $|\spanning_k(A)| \leq w$.
It follows that $|\depending_{k+1}(A)| \leq w(d+1)$.
Note that for $k = n$,
we have $\y_n = \x$ and therefore $\coeffset{A}{\y_n}{\a} = \coeff_A(\xa)$
is a constant for every~$\a$.
Hence,
the coefficient space has dimension one in this case,
and thus $|\spanning_n(A)| = 1$.

Now we are ready to construct an ROABP for $A$.


\begin{lemma}[\cite{Nis91}, Converse of Lemma \ref{lem:ROABPdim}]
Let $A(\x)$ be a polynomial of individual degree~$d$
with $\x = (x_1, x_2, \dots, x_n)$,
such that for any $1 \leq k \leq n$ and $\y_k = (x_1,x_2, \dots, x_k)$, 
we have
\[
\dim_\F \{\, \coeffset{A}{\y_k}{\a} \mid \a \in \{0,1,\dots, d\}^k\, \} \leq w \,.
\]
Then there exists an  {\rm ROABP} of width~$w$ for~$A(\x)$ in the variable order
$(x_1, x_2, \dots, x_n)$.
\label{lem:dimROABP}
\end{lemma}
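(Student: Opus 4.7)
The plan is to construct the ROABP layer by layer, using the spanning sets $\spanning_k(A)$ of Definition~\ref{def:dependencies} to index the nodes of layer~$k$. Since $|\spanning_k(A)| \le w$ and $|\spanning_0(A)| = |\spanning_n(A)| = 1$, this will automatically give the start/end structure and width bound. The edge weights in layer~$k$ will be univariate polynomials in $x_k$ read off from the characterizing set of dependencies.

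Concretely, I would first observe the straightforward identity
\[
\coeffset{A}{\y_{k-1}}{\b}(x_k,\z_k) \;=\; \sum_{j=0}^{d} x_k^{j}\,\coeffset{A}{\y_k}{(\b,j)}(\z_k),
\]
valid for every $\b \in \{0,1,\dots,d\}^{k-1}$, which just separates the degree of $x_k$ inside the coefficient polynomial. For $\b \in \spanning_{k-1}(A)$, each $(\b,j)$ lies in $\depending_k(A)$, so by Definition~\ref{def:dependencies} there exist scalars $\alpha_{\b,j,\a} \in \F$ with
\[
\coeffset{A}{\y_k}{(\b,j)} \;=\; \sum_{\a \in \spanning_k(A)} \alpha_{\b,j,\a}\,\coeffset{A}{\y_k}{\a}.
\]
I would then define the matrix $D_k(x_k) \in \F[x_k]^{|\spanning_{k-1}(A)| \times |\spanning_k(A)|}$ by $D_k(x_k)(\b,\a) := \sum_{j=0}^{d} \alpha_{\b,j,\a}\, x_k^j$. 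Substituting back yields the crucial recursive identity
\[
\coeffset{A}{\y_{k-1}}{\b} \;=\; \sum_{\a \in \spanning_k(A)} D_k(x_k)(\b,\a)\,\coeffset{A}{\y_k}{\a},
\]
which is an equality of polynomials in $\z_{k-1} = (x_k,\dots,x_n)$.

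Next, I would unfold this recursion starting from $k=0$, where $\spanning_0(A) = \{\epsilon\}$ and $\coeffset{A}{\y_0}{\epsilon} = A(\x)$. A straightforward induction on~$k$ gives
\[
A(\x) \;=\; \sum_{\a_1,\dots,\a_n} D_1(x_1)(\epsilon,\a_1)\,D_2(x_2)(\a_1,\a_2)\cdots D_n(x_n)(\a_{n-1},\a_n)\,\coeffset{A}{\y_n}{\a_n},
\]
where the sum ranges over $\a_k \in \spanning_k(A)$. Since $\y_n = \x$, each $\coeffset{A}{\y_n}{\a}$ is a scalar in $\F$, and because $|\spanning_n(A)| = 1$ (the coefficient space of $\F$ is one-dimensional), the final sum collapses to a single term: a scalar can be absorbed into $D_n$. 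This makes $D_1 D_2 \cdots D_n$ literally an ROABP product that equals $A(\x)$, with matrix dimensions at most $w$ and variable order $(x_1,\dots,x_n)$.

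The one subtlety I expect to flag is that the $\alpha_{\b,j,\a}$ genuinely exist even when $\spanning_k(A)$ is not a basis; this follows directly from the definition, which only requires $\{\coeffset{A}{\y_k}{\a}\mid \a\in\spanning_k(A)\}$ to span the polynomials indexed by $\depending_k(A)$, and in particular to span all $\coeffset{A}{\y_k}{\b}$ with $\b\in\{0,\dots,d\}^k$ (by the hypothesized dimension bound and the inclusion $\spanning_{k-1}(A)\subseteq\{0,\dots,d\}^{k-1}$, iterated). The main bookkeeping obstacle is handling the boundary layers cleanly---the first layer has $D_1$ as a row vector (since $|\spanning_0|=1$), the last has $D_n$ as a column vector (since $|\spanning_n|=1$), and one must verify that the absorbed constant $\coeffset{A}{\y_n}{\a^*}$ does not spoil the read-once oblivious structure, which it does not since it is a scalar independent of~$\x$.
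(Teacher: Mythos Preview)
Your proposal is correct and follows essentially the same approach as the paper: both construct the matrices $D_k$ from the characterizing dependencies of Definition~\ref{def:dependencies}, via the identity $\coeffset{A}{\y_{k-1}}{\b} = \sum_j x_k^j\,\coeffset{A}{\y_k}{(\b,j)}$ together with the spanning property of $\spanning_k(A)$ over $\depending_k(A)$, and then unfold the resulting recursion (the paper's equation~(\ref{eq:D1-k})) to obtain $A = D_1\cdots D_n$ after absorbing the final scalar. Your handling of the boundary layers and the absorbed constant $\coeffset{A}{\y_n}{\a_{n,1}}$ matches the paper's exactly.
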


\begin{proof}
To keep the notation simple,
we assume\footnote{Assuming $d+1 \ge w,~\spanning_k(A)$ can be made to have size $=w$ for each $k$.}
that $|\spanning_k(A)| = w$
for each $1 \leq k \leq n-1$.
The argument would go through even when $|\spanning_k(A)| < w$.
Let  $\spanning_k(A) = \{ \a_{k,1},\a_{k,2}, \dots, \a_{k,w} \}$
and $\spanning_n(A) = \{ \a_{n,1} \}$.

To prove the claim,
we construct matrices $D_1,D_2, \dots, D_n$, 
where $D_1 \in \F[x_1]^{1 \times w}$, $D_n \in \F[x_n]^{w \times 1}$,
and $D_i \in \F[x_i]^{w \times w}$, for $i = 2, \dots, n-1$,
such that $A(\x) = D_1\, D_2 \cdots D_n$.
This representation shows that there is an ROABP of width~$w$ for~$A(\x)$.

The matrices are constructed inductively
such that for $k = 1,2 \dots, n-1$,
\begin{equation}
A(\x) = D_1 D_2 \cdots D_k 
\, [\coeffset{A}{\y_k}{\a_{k,1}}\; \coeffset{A}{\y_k}{\a_{k,2}}\;  \cdots \; \coeffset{A}{\y_k}{\a_{k,w}} ]^T \,.
\label{eq:D1-k}
\end{equation}

To construct $D_1 \in \F[x_1]^{1 \times w}$, consider the equation
\begin{equation}
A(\x) = \sum_{j=0}^d \coeffset{A}{\y_1}{j}\, x_1^j.
\label{eq:Apart1}
\end{equation}
Recall that $\depending_1(A) = \{0,1, \dots, d\}$.
By the definition of $\spanning_1(A)$,
every $\coeffset{A}{\y_1}{j}$ is in the span of the $\coeffset{A}{\y_1}{\a}$'s
 for $\a \in \spanning_1(A)$.
That is,
there exists constants~$\{\gamma_{j,i}\}_{i,j}$
such that for all $0 \leq j \leq d$ we have
\begin{equation}
\coeffset{A}{\y_1}{j} = \sum_{i=1}^{w} \gamma_{j,i}\, \coeffset{A}{\y_1}{\a_{1,i}}. 
\label{eq:Agamma1}
\end{equation}
From equations~(\ref{eq:Apart1}) and~(\ref{eq:Agamma1}) 
we get, 
$A(\x) = \sum_{i=1}^{w} \left( \sum_{j=0}^d \gamma_{j,i}\, x_1^j \right) \coeffset{A}{\y_1}{\a_{1,i}}.$
Hence,
we define $D_1 = [D_{1,1} \; D_{1,2} \; \cdots \; D_{1,{w}}]$,
where $D_{1,i} = \sum_{j=0}^d \gamma_{j,i}\, x_1^j$, for all $i \in [{w}]$.
Then we have
\begin{equation}
A = D_1\, [\coeffset{A}{\y_1}{\a_{1,1}} \; \coeffset{A}{\y_1}{\a_{1,2}} \; \cdots \; \coeffset{A}{\y_1}{\a_{1,{w}}} ]^T.
\label{eq:layer1}
\end{equation}

To construct $D_{k} \in \F[x_{k}]^{w \times w}$ for $2 \leq k \leq n-1$,
we consider the equation
\begin{equation}
[\coeffset{A}{\y_{k-1}}{\a_{k-1,1}} \cdots \coeffset{A}{\y_{k-1}}{\a_{k-1,w}} ]^T
= D_{k} \,
[\coeffset{A}{\y_{k}}{\a_{{k},1}} \cdots \coeffset{A}{\y_{k}}{\a_{{k},w}} ]^T \,.
\label{eq:layerk}
\end{equation}

We know that for each $1 \leq i \leq w$,
\begin{equation}
\coeffset{A}{\y_{k-1}}{\a_{k-1,i}} = \sum_{j=0}^d \coeffset{A}{\y_{k}}{ (\a_{k-1,i}, j) }\, x_{k}^j.
\label{eq:Apartk}
\end{equation}
Observe that~$(\a_{k-1,i}, j)$ is just an extension of~$\a_{k-1,i}$ and
thus belongs to $\depending_{k}(A)$.
Hence, there exists a set of constants
$\{\gamma_{i,j,h}\}_{i,j,h}$
such that for all $0 \leq j \leq d$ we have
\begin{equation}
\coeffset{A}{\y_{k}}{ (\a_{k-1,i}, j)} = \sum_{h=1}^{w} \gamma_{i,j,h}\, \coeffset{A}{\y_{k}}{\a_{{k},h}} . 
\label{eq:Agammak}
\end{equation}

From equations~(\ref{eq:Apartk}) and~(\ref{eq:Agammak}), 
for each $1 \leq i \leq w$ we get
\[
\coeffset{A}{\y_{k-1}}{\a_{k-1,i}} = \sum_{h=1}^{w} \left( \sum_{j=0}^d \gamma_{i,j,h}\, x_{k}^j \right)
\coeffset{A}{\y_{k}}{\a_{{k},h}}\,.
\]
Hence, we can define $D_{k}(i,h) = \sum_{j=0}^d \gamma_{i,j,h}\, x_{k}^j$,
for all $i,h \in [w]$.
Then~$D_{k}$ is the desired matrix in equation~(\ref{eq:layerk}).

Finally,
we obtain~$D_n \in \F^{w \times 1}[x_n]$ in an analogous way.
Instead of equation~(\ref{eq:layerk}) we consider the equation
\begin{equation}
[\coeffset{A}{\y_{n-1}}{\a_{n-1,1}} \cdots \coeffset{A}{\y_{n-1}}{\a_{n-1,w}} ]^T
= D'_{n} \,
[\coeffset{A}{\y_n}{\a_{n,1}}] \,.
\label{eq:layern}
\end{equation}
Recall that~$\coeffset{A}{\y_n}{\a_{n,1}} \in \F$ is a constant
that can be absorbed into the last matrix~$D'_n$,
i.e.\ we define $D_n = D'_n\, \coeffset{A}{\y_n}{\a_{n,1}}$.
Combining equations~(\ref{eq:layer1}),~(\ref{eq:layerk}), and~(\ref{eq:layern}),
we get $A(\x) = D_1\, D_2 \cdots D_n$.
\end{proof}

Consider the polynomial~$P_k$ defined as the product of the first~$k$ matrices
$D_1, D_2, \dots, D_k$ from the above proof;
$P_k(\y_k) = D_1 D_2 \cdots D_k$.
We can write~$P_k$ as 
\[
P_k(\y_k) = \sum_{\a \in \{0,1, \dots,d\}^k} \coeff_{P_k}(\y_k^{\a})\, \y_k^{\a}\,,
\]
where $\coeff_{P_k}(\y_k^{\a})$ is a vector in~$\F^{1 \times w}$.
We will see next
that it follows from the proof of Lemma~\ref{lem:dimROABP}  that the
coefficient space of~$P_k$,
i.e., 
$\Span_{\F} \{ \coeff_{P_k}(\y_k^{\a}) \mid \a \in \{0,1, \dots,d\}^k \}$ 
has full rank~$w$.

\begin{corollary}[Full Rank Coefficient Space]
\label{cor:coeffspaceP}
Let $D_1, D_2, \dots ,D_n$ be the matrices constructed in the proof of Lemma~\ref{lem:dimROABP} 
with $A = D_1 D_2 \cdots D_n$.
Let $\spanning_k(A) = \{ \a_{k,1},\a_{k,2}, \dots, \a_{k,w} \}$.
For $k \in [n]$, define the polynomial $P_k(\y_k) = D_1 D_2 \cdots D_k$.

Then for any $\ell \in [w]$, we have
$\coeff_{P_k}(\y_k^{\a_{k,\ell}}) = \e_{\ell}$,
where~$\e_{\ell}$ is
the $\ell$-th {\em elementary unit vector\/},
$\e_{\ell} = (0, \dots, 0,1,0, \dots, 0)$ of length~$w$, with a one at position~$\ell$,
and zero at all other positions.
Hence, the coefficient space of~$P_k$ has full rank~$w$.
\end{corollary}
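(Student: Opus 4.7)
The plan is to prove the corollary by induction on $k$, tracing carefully through the construction of the matrices $D_1, D_2, \dots, D_n$ in the proof of Lemma~\ref{lem:dimROABP}. Since $P_k = P_{k-1}\, D_k$ for $k\ge 2$, and since $P_{k-1}$ depends only on $\y_{k-1}$ while $D_k$ depends only on $x_k$, the coefficients multiply cleanly: for any $\a' \in \{0,\dots,d\}^{k-1}$ and $j \in \{0,\dots,d\}$,
\[
\coeff_{P_k}\bigl(\y_{k-1}^{\a'}\, x_k^{j}\bigr) \;=\; \coeff_{P_{k-1}}\bigl(\y_{k-1}^{\a'}\bigr)\cdot \coeff_{D_k}\bigl(x_k^{j}\bigr).
\]
This identity is the workhorse of the induction.

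For the base case $k=1$, the construction sets $D_{1,i} = \sum_j \gamma_{j,i}\, x_1^j$, where $\gamma_{j,i}$ expresses $\coeffset{A}{\y_1}{j}$ as a combination of $\{\coeffset{A}{\y_1}{\a_{1,i}}\}_i$. When the target index $j$ itself equals $\a_{1,\ell}\in\spanning_1(A)$, we are free to take the trivial expansion $\gamma_{\a_{1,\ell},i}=\delta_{i,\ell}$, and then $\coeff_{P_1}(x_1^{\a_{1,\ell}})=\coeff_{D_1}(x_1^{\a_{1,\ell}})=\e_\ell$.

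For the inductive step, I would use that $\spanning_k(A)\subseteq\depending_k(A)$, so each $\a_{k,\ell}$ has a unique decomposition $\a_{k,\ell}=(\a_{k-1,i^*},j^*)$ with $\a_{k-1,i^*}\in\spanning_{k-1}(A)$. Applying the product formula above and the induction hypothesis $\coeff_{P_{k-1}}(\y_{k-1}^{\a_{k-1,i^*}})=\e_{i^*}$ gives
\[
\coeff_{P_k}\bigl(\y_k^{\a_{k,\ell}}\bigr) \;=\; \e_{i^*}\cdot \coeff_{D_k}\bigl(x_k^{j^*}\bigr),
\]
i.e.\ the $i^*$-th row of $\coeff_{D_k}(x_k^{j^*})$, whose entries are $\gamma_{i^*,j^*,h}$ from equation~(\ref{eq:Agammak}). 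Since $(\a_{k-1,i^*},j^*)=\a_{k,\ell}$ already lies in $\spanning_k(A)$, these $\gamma$'s are simply the coefficients expressing $\coeffset{A}{\y_k}{\a_{k,\ell}}$ as a combination of $\{\coeffset{A}{\y_k}{\a_{k,h}}\}_h$; choosing the trivial expansion yields $\gamma_{i^*,j^*,h}=\delta_{\ell,h}$ and hence $\coeff_{P_k}(\y_k^{\a_{k,\ell}})=\e_\ell$.

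The only subtle point — and the place where one has to be careful — is the non-uniqueness of the $\gamma$'s. Because Definition~\ref{def:dependencies} does \emph{not} require $\spanning_k(A)$ to be linearly independent, the coefficients expressing a given polynomial in terms of the spanning set are not determined. The resolution is a canonical-choice observation: whenever the polynomial to be expanded is itself one of $\coeffset{A}{\y_k}{\a_{k,h}}$, we pick the trivial expansion. With this convention baked into the construction of the $D_k$'s in Lemma~\ref{lem:dimROABP}, the induction goes through and the full-rank claim follows immediately, since $\{\e_1,\dots,\e_w\}$ spans $\F^{1\times w}$.
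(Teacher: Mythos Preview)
Your proposal is correct and follows essentially the same approach as the paper: induction on $k$, the product formula $\coeff_{P_k}(\y_k^{(\a',j)}) = \coeff_{P_{k-1}}(\y_{k-1}^{\a'})\cdot\coeff_{D_k}(x_k^{j})$, and the key observation that when $(\a_{k-1,i},j)=\a_{k,\ell}\in\spanning_k(A)$ one may choose the $\gamma$-vector to be $\e_\ell$. Your write-up is in fact more explicit than the paper's, in particular by flagging the non-uniqueness of the $\gamma$'s and spelling out the canonical-choice convention that makes the argument go through.
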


\begin{proof}
In the construction of the matrices~$D_k$ in the proof of Lemma~\ref{lem:dimROABP},
consider the special case in equations~(\ref{eq:Agamma1}) and~(\ref{eq:Agammak})
that the  exponent~$(\a_{k-1,i}, j)$ is in~$\spanning_k(A)$,
say $(\a_{k-1,i}, j)= \a_{k,\ell} \in \spanning_k(A)$.
Then the $\gamma$-vector to express~$\coeffset{A}{\y_{k}}{ (\a_{k-1,i}, j)}$
in equation~(\ref{eq:Agamma1}) and~(\ref{eq:Agammak}) can be chosen to be~$\e_{\ell}$,
i.e.\ 
$\left(\gamma_{i,j,h}\right)_h = \e_{\ell}$.
By the definition of matrix~$D_k$, vector~$\e_{\ell}$ becomes the $i$-th row
of~$D_k$ for the exponent~$j$,
i.e.,
$\coeff_{\row{D_k}{i}}(x_k^{j})=\e_{\ell}$.

This shows the claim for $k=1$.
For larger~$k$, it follows by induction because for $(\a_{k-1,i}, j)= \a_{k,\ell}$
we have
$\coeff_{P_k}(\y_k^{\a_{k,\ell}}) = \coeff_{P_{k-1}}(\y_{k-1}^{\a_{k-1,i}}) \coeff_{D_k}(x_k^{j})$ \,.
\end{proof}


\section{Whitebox Identity Testing}
\label{sec:whitebox}

We will use the characterization of ROABPs provided by 
Lemmas~\ref{lem:ROABPdim} and~\ref{lem:dimROABP} in Section~\ref{sec:twoROABP}
to design a polynomial-time algorithm to check if two given ROABPs are equivalent.
This is the same problem as checking whether the sum of two ROABPs is zero.
In Section~\ref{sec:sumOfC},
we extend the test to check whether the sum of constantly many ROABPs is zero.


\subsection{Equivalence of two ROABPs}
\label{sec:twoROABP}

Let $A(\x)$ and $B(\x)$ be two polynomials of individual degree~$d$,
given by two ROABPs.
If the two ROABPs have the same variable order then
one can combine them into a single ROABP which computes their difference.
Then one can apply the test for one ROABP 
(whitebox \cite{RS05}, blackbox \cite{AGKS14}). 
So, the problem is non-trivial only when the two ROABPs have different variable order.
W.l.o.g.\ we assume that~$A$ has order $(\lis{x}{,}{n})$.
Let~$w$ bound the width of both ROABPs.
In this section we prove that we can find out in polynomial time 
whether $A(\x) = B(\x)$.

\begin{theorem}\label{thm:2ROABPwb}
The equivalence of two ROABPs can be tested in polynomial time.
\end{theorem}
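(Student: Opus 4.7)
The plan is to turn the equivalence test into a verification that the characterizing set of dependencies of $A$ (Definition~\ref{def:dependencies}), computed in $A$'s own variable order, also holds for $B$. Assume without loss of generality that $A$'s variable order is $(x_1,\dots,x_n)$ while $B$ is given in a different order. The construction in Lemma~\ref{lem:dimROABP} shows that $A$ is completely determined by: (a) the spanning sets $\spanning_k(A)$ together with the $\F$-linear coefficients $\{\gamma_{k,\b,\a}\}$ expressing each $\coeffset{A}{\y_k}{\b}$ in terms of the spanning ones; and (b) the single scalar $\coeff_A(\x^{\a_{n,1}})$ absorbed into the last layer. Accordingly the algorithm has three parts: compute (a) and (b) from $A$; verify that every relation in (a) also holds for $B$; and compare the base scalars in (b).

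For the first part, I would exploit that for $A$'s given ROABP $D_1 D_2 \cdots D_n$ the prefix product $P_k = D_1 \cdots D_k$ directly yields the row-vector $v_\a := \coeff_{P_k}(\y_k^\a) \in \F^{1\times w}$, and by equation~(\ref{eq:coeffset}) in the proof of Lemma~\ref{lem:ROABPdim} we have $\coeffset{A}{\y_k}{\a} = \sum_{i=1}^w v_\a(i)\, Q_i(\z_k)$ where $Q_i$ is the $i$-th entry of the suffix product. Any $\F$-linear dependence among the $v_\a$'s thus lifts to the \emph{same} dependence among the polynomials $\coeffset{A}{\y_k}{\a}$. The computation iterates $k = 1, \dots, n$: construct $\depending_k(A)$ from the already-found $\spanning_{k-1}(A)$ (so of size at most $w(d{+}1)$), compute the vectors $v_\a$ by matrix multiplication, and run Gaussian elimination in $\F^w$ to extract a spanning subset $\spanning_k(A)$ of size at most $w$ along with the coefficients $\gamma_{k,\b,\a}$. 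All of this is polynomial time in $n, d, w$.

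For the verification, Lemma~\ref{lem:pdROABP} provides an ROABP of width~$w$ for each $\coeffset{B}{\y_k}{\a}$, and crucially all such ROABPs share a common variable order inherited from $B$ after deleting the variables of $\y_k$. Hence the candidate-zero polynomial
\[
P_{k,\b}(\z_k) \;=\; \coeffset{B}{\y_k}{\b} \;-\; \sum_{\a \in \spanning_k(A)} \gamma_{k,\b,\a}\, \coeffset{B}{\y_k}{\a}
\]
is a sum of at most $w{+}1$ ROABPs in a common variable order, which combine block-diagonally into a single ROABP of width $O(w^2)$; its identity to zero is decidable in polynomial time by the Raz--Shpilka whitebox ROABP test~\cite{RS05}. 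Only $O(nwd)$ such checks are required, one per pair $(k, \b)$ with $\b \in \depending_k(A)$, plus a final scalar comparison $\coeff_A(\x^{\a_{n,1}}) \stackrel{?}{=} \coeff_B(\x^{\a_{n,1}})$.

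For correctness, one direction is immediate; for the other, suppose every check passes. Then running the inductive construction of Lemma~\ref{lem:dimROABP} on $B$ with the same $\gamma$'s produces exactly the same matrices $D_1, \dots, D_{n-1}$ as on $A$ and yields $B = D_1 \cdots D_{n-1}\, [\coeffset{B}{\y_{n-1}}{\a_{n-1,1}} \,\cdots\, \coeffset{B}{\y_{n-1}}{\a_{n-1,w}}]^T$; the analogous identity for $A$ together with the matched base scalar forces the last matrix to agree as well, giving $A = B$. The main conceptual obstacle I foresee is precisely this last step: verifying linear dependencies alone does \emph{not} pin down the absolute values of the basis coefficients of $B$, so the single base-scalar comparison at level~$n$ is indispensable, and the inductive lift of the matrix construction from $A$ to $B$ needs to be justified carefully using that $\depending_k(A)$ automatically covers the $(d{+}1)w$ extensions of $\spanning_{k-1}(A)$ that appear in the induction step.
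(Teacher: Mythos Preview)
Your proposal is correct and follows essentially the same approach as the paper: compute the characterizing dependencies of~$A$ from the prefix-product vectors (the paper's $C_{\b}$ are your $v_{\a}$), verify each dependency for~$B$ by combining the $\leq w+1$ ROABPs from Lemma~\ref{lem:pdROABP} into one ROABP of width $w(w+1)$ and invoking Raz--Shpilka, and finish with the single scalar comparison at level~$n$. Your correctness argument via the inductive reconstruction of the $D_k$'s from Lemma~\ref{lem:dimROABP} is exactly what the paper does, and you correctly identify that the base-scalar check is what ties everything down.
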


The idea is to determine the characterizing set of dependencies
among the partial derivative polynomials of~$A$,
and verify that the same dependencies hold for the corresponding partial derivative polynomials of~$B$.
By Lemma~\ref{lem:dimROABP},
these dependencies essentially define an ROABP.
Hence,
our algorithm is to construct an ROABP for~$B$ in the variable order of~$A$.
Then it suffices to check whether we get the same ROABP,
that is, whether all the matrices $D_1, D_2, \dots, D_n$ constructed
in the proof of Lemma~\ref{lem:dimROABP}
are the same for~$A$ and~$B$.
We give some more details.

\paragraph*{Construction of $\spanning_k(A)$.}
Let $A(\x) = D_1(x_1) D_2(x_2) \cdots D_n(x_n)$ of width~$w$.
We give an iterative construction, starting from $\spanning_0(A) = \{\epsilon\}$.
Let $1 \leq k \leq n$.
By definition, $\depending_k(A)$ consists of all possible one-step extensions of $\spanning_{k-1}(A)$.
Let
$\b = (b_1, b_2, \dots, b_k) \in \{0,1, \dots, d\}^k$.
Define
\[
C_{\b} = \prod_{i=1}^k \coeff_{D_i}(x_i^{b_i})\,.
\]
Recall that
$\coeff_{D_1}(x_1^{b_1}) \in \F^{1 \times w}$ and
$\coeff_{D_i}(x_i^{b_i}) \in \F^{w \times w}$, for  $2 \leq i \leq k$.
Therefore  $C_{\b} \in \F^{1 \times w}$ for $k<n$.
Since $D_n \in \F^{w \times 1}$, we have $C_{\b} \in \F$ for $k=n$.
By equation~(\ref{eq:coeffset}), we have
\begin{equation}
\coeffset{A}{\y_k}{\b} = C_{\b}\, D_{k+1} \cdots D_n \,. 
\label{eq:coeffAyb}
\end{equation}
Consider the set of vectors ${\cal D}_k = \{ C_{\b} \mid \b \in \depending_k(A)\}$.
This set has dimension bounded by~$w$ since the width of~$A$ is~$w$.
Hence, we can determine a set ${\cal S}_k \subseteq {\cal D}_k$
of size~$\le w$ such that~${\cal S}_k$ spans~${\cal D}_k$.
Thus we can take
$\spanning_k(A)  = \{ \a \mid C_{\a} \in {\cal S}_k \}$.
Then,
for any $\b \in \depending_k(A)$,
vector~$C_{\b}$ is a linear combination
\[
C_\b = \sum_{\a \in \spanning_k(A)} \gamma_{\a}\, C_\a \,.
\]
Recall that $|\depending_k(A)| \leq w(d+1)$, i.e.\ this is a small set.
Therefore, we can efficiently compute the coefficients~$\gamma_{\a}$
for every $\b \in \depending_k(A)$ .
Note that by equation~(\ref{eq:coeffAyb}) we have the same dependencies
for the polynomials~$\coeffset{A}{\y_k}{\b}$.
That is, with the same coefficients~$\gamma_{\a}$,
we can write
\begin{equation}
\coeffset{A}{\y_k}{\b} = \sum_{\a \in \spanning_k(A)} \gamma_\a\, \coeffset{A}{\y_k}{\a} \,.
\label{eq:depA}
\end{equation}

\paragraph*{Verifying the dependencies for $B$.}
We want to verify that the dependencies in equation~(\ref{eq:depA})
computed for~$A$ hold for~$B$ as well,
i.e. that for all $k \in [n]$ and $\b \in \depending_k(A)$,
\begin{equation}
\coeffset{B}{\y_k}{\b} = \sum_{\a \in \spanning_k(A)} \gamma_\a\, \coeffset{B}{\y_k}{\a} \,.
\label{eq:depB}
\end{equation}

Recall that $\y_k = (x_1, x_2, \dots, x_k)$ and the ROABP for~$B$
has a different variable order.
By Lemma~\ref{lem:pdROABP},
every polynomial~$\coeffset{B}{\y_k}{\a}$
has an ROABP of width~$w$ and the same order on the
remaining variables as the one given for~$B$.
It follows that each of the $w+1$ polynomials that occur in equation~(\ref{eq:depB})
has an ROABP of width~$w$ and the same variable order.
Hence, we can construct {\em one\/} ROABP for the polynomial
\begin{equation}
\coeffset{B}{\y_k}{\b} - \sum_{\a \in \spanning_k(A)} \gamma_\a \coeffset{B}{\y_k}{\a} \,.
\label{eq:depB'}
\end{equation}
Simply identify all the start nodes and all the end nodes
and put the appropriate constants~$\gamma_{\a}$ to the weights.
Then we get an ROABP of width~$w(w+1)$.
In order to verify equation~(\ref{eq:depB}),
it suffices to do a zero-test for this ROABP.
This can be done in polynomial time~\cite{RS05}.

\paragraph*{Constructing ROABP for $B$ in the same sequence as $A$}
Recall Lemma~\ref{lem:dimROABP} and its proof.
There, we constructed an ROABP just from the characterizing dependencies
of the given polynomial.
Hence,
the construction applied to~$B$ will give an ROABP of width~$w$ for~$B$ 
with the same variable order $(x_1, x_2, \dots, x_n)$ as for~$A$.
The matrices~$D_k$ will be the same as those for~$A$ because their definition
uses only the dependencies provided by equation~(\ref{eq:depB}),
and they are the same as those for~$A$ in equation~(\ref{eq:depA}).

The last matrix $D_n$ can be written as $D_n' \coeffset{A}{\y_{n}}{\a_{n,1}}$,
similar to equation~(\ref{eq:layern}).
Since 
the dependencies of the coefficients in $\depending_n(B)$ over
coefficents in $\spanning_n(B)$
are the same as those for $A$,
$B(\x) = D_1 D_2 \cdots D_n'\, \coeffset{B}{\y_{n}}{\a_{n,1}}$.

\paragraph*{Checking Equality.}
Clearly,
if equation~(\ref{eq:depB}) fails to hold for some~$k$ and~$\b$,
then $A \not= B$.
When equation~(\ref{eq:depB}) holds for all~$k$ and~$\b$,
we only need to check if $\coeffset{A}{\y_{n}}{\a_{n,1}} = \coeffset{B}{\y_{n}}{\a_{n,1}}$,
which is a single evaluation of each ROABP.
This proves Theorem~\ref{thm:2ROABPwb}.



\subsection{Sum of constantly many ROABPs}
\label{sec:sumOfC}

Let $A_1(\x), A_2(\x), \dots, A_c(\x)$ be polynomials of individual degree~$d$,
given by~$c$ ROABPs.
Our goal is to test whether $A_1 + A_2 + \cdots + A_c = 0.$
Here again, the question is interesting only when the ROABPs have different variable orders.
We show how to reduce the problem to the case of the equivalence of two ROABPs
from the previous section.
For constant~$c$ this will lead to a polynomial-time test.

We start by rephrasing the  problem as an equivalence test.
Let $A = -A_1$ and $B = A_2 + A_3 + \cdots + A_c$. 
Then the problem has become to check whether $A = B$.
Since~$A$ is computed by a single ROABP,
we can use the same approach as in Section~\ref{sec:twoROABP}.
Hence, we again get the dependencies from equation~(\ref{eq:depA}) for~$A$.
Next,
we have to verify these dependencies for~$B$,
i.e.\ equation~(\ref{eq:depB}).
Now, $B$ is not given by a single ROABP, but is a sum of~$c-1$ ROABPs.
For every $k \in [n]$ and $\b \in \depending_k(A)$,
define the polynomial 
$Q = \coeffset{B}{\y_k}{\b} - \sum_{\a \in \spanning_k(A)} \gamma_\a \coeffset{B}{\y_k}{\a}$.
By the definition of~$B$ we have
\begin{equation}
Q = 
\sum_{i=2}^c \left( \coeffset{A_i}{\y_k}{\b} - \sum_{\a \in \spanning_k(A)} \gamma_\a \coeffset{A_i}{\y_k}{\a} \right).
\label{eq:depcA}
\end{equation}
As explained in the previous section for equation~(\ref{eq:depB'}), 
for each summand in equation~(\ref{eq:depcA}) we can construct an ROABP of width~$w(w+1)$.
Thus, $Q$ can be written as a sum of~$c-1$ ROABPs, each having width~$w(w+1)$.
To test whether $Q = 0$, we recursively use the same algorithm for the sum of~$c-1$ ROABPs.
The recursion ends when $c=2$.
Then we directly use the algorithm from Section~\ref{sec:twoROABP}.

To bound the running time of the algorithm,
let us see how many dependencies we need to verify. 
There is one dependency for every $k \in [n]$ and every  $\b \in \depending_k(A)$.
Since $\abs{\depending_k(A)} \leq w(d+1)$, 
the total number of dependencies verified is $ \leq nw(d+1)$. 
Thus, we get the following recursive formula for~$T(c,w)$, 
the time complexity  for testing zeroness of the sum of~$c \geq 2$ ROABPs, each having width~$w$.
For $c=2$, we have $T(2,w) = \poly(n,d,w)$, and for $c >2$,
\[
T(c,w) = nw(d+1) \cdot T(c-1,w(w+1)) + \poly(n,d,w). 
\]
As solution, we get
$T(c,w) = w^{O(2^c)} \poly(n^c, d^c)$,
i.e.\ polynomial time for constant~$c$.

\begin{theorem}\label{thm:PITcROABPwb}
Let $A(\x)$ be an $n$-variate polynomial of individual degree~$d$, computed by a sum of~$c$ {\rm ROABPs} of width~$w$.
Then there is a PIT for~$A(\x)$ that works in time $w^{O(2^c)} (nd)^{O(c)}$.
\end{theorem}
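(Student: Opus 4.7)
The plan is to rephrase the sum-of-$c$-ROABPs zeroness test as an equivalence test between a single ROABP and a sum of $c-1$ ROABPs, and then invoke the characterizing-dependencies machinery from Section~\ref{sec:twoROABP} together with a recursive call. Concretely, set $A := -A_1$ and $B := A_2 + A_3 + \cdots + A_c$, so the task is $A \stackrel{?}{=} B$. Since $A$ is a single ROABP with some variable order, say $(x_1,\dots,x_n)$, I can compute, exactly as in Section~\ref{sec:twoROABP}, the sets $\spanning_k(A)$ and $\depending_k(A)$ and the scalars $\gamma_\a$ that realize the characterizing dependencies~(\ref{eq:depA}) for all prefixes $\y_k$. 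There are at most $n \cdot w(d+1)$ such dependencies, and each $\gamma_\a$ can be extracted efficiently from products of the coefficient matrices $\coeff_{D_i}(x_i^{b_i})$.

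Next, I verify that the same linear dependencies~(\ref{eq:depB}) hold for $B$. For fixed $k$ and $\b \in \depending_k(A)$, the quantity I need to test is
\[
Q \;=\; \coeffset{B}{\y_k}{\b} - \sum_{\a \in \spanning_k(A)} \gamma_\a\, \coeffset{B}{\y_k}{\a}
\;=\; \sum_{i=2}^{c} \Bigl( \coeffset{A_i}{\y_k}{\b} - \sum_{\a \in \spanning_k(A)} \gamma_\a\, \coeffset{A_i}{\y_k}{\a} \Bigr).
\]
By Lemma~\ref{lem:pdROABP}, each $\coeffset{A_i}{\y_k}{\,\cdot\,}$ is computable by an ROABP of width $w$ in the variable order inherited from $A_i$. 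Identifying start and end nodes and folding the $\gamma_\a$ into edge weights, each of the $c-1$ summands above becomes an ROABP of width at most $w(w+1)$. Hence $Q$ is a sum of $c-1$ ROABPs of width $w(w+1)$, and its zeroness is exactly the problem we are solving, with $c$ decremented by one and $w$ squared (up to a $+1$). I recurse on this instance. The base case $c=2$ is handled directly by Theorem~\ref{thm:2ROABPwb}. Finally, if all dependencies check out, it remains to compare the single scalar $\coeffset{A}{\y_n}{\a_{n,1}}$ with $\coeffset{B}{\y_n}{\a_{n,1}}$, which is a handful of evaluations.

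For the time bound, let $T(c,w)$ denote the cost of testing zeroness of a sum of $c$ ROABPs each of width $w$. The reduction yields at most $n w(d+1)$ recursive calls, each on an instance with $c-1$ ROABPs of width $w(w+1)$, plus $\poly(n,d,w)$ overhead for constructing the dependencies, giving
\[
T(c,w) \;=\; n\,w\,(d+1)\cdot T\bigl(c-1,\, w(w+1)\bigr) \;+\; \poly(n,d,w),
\]
with $T(2,w) = \poly(n,d,w)$. Unrolling, the width after $c-2$ recursive steps is bounded by $(2w)^{2^{c-2}}$, and accumulating the $n w (d+1)$ factors yields $T(c,w) = w^{O(2^c)}(nd)^{O(c)}$ as claimed.

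The main obstacle is the quadratic width blow-up at each recursion level: the step from $w$ to $w(w+1)$ is unavoidable because verifying a single dependency in~(\ref{eq:depB}) requires forming an ROABP whose width is the product of the original width $w$ and the $w+1$ polynomials being linearly combined. Iterating this $c-1$ times forces the doubly exponential $w^{2^c}$ dependence on $c$, so the bound is polynomial only for constant $c$. Everything else (bounding the number of dependencies, controlling the overhead, and invoking Theorem~\ref{thm:2ROABPwb} at the base) is routine.
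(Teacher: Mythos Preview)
Your proposal is correct and follows essentially the same approach as the paper: rephrase as $A = -A_1$ versus $B = \sum_{i\ge 2} A_i$, extract the characterizing dependencies of the single ROABP $A$, verify each dependency on $B$ by writing the discrepancy polynomial $Q$ as a sum of $c-1$ ROABPs of width $w(w+1)$ via Lemma~\ref{lem:pdROABP}, and recurse down to the $c=2$ base case of Theorem~\ref{thm:2ROABPwb}. The recurrence $T(c,w) = nw(d+1)\cdot T(c-1,w(w+1)) + \poly(n,d,w)$ and its solution $w^{O(2^c)}(nd)^{O(c)}$ match the paper exactly.
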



\section{Blackbox Identity Testing}
\label{sec:blackbox}

In this section,
we extend the blackbox PIT of Agrawal et.~al~\cite{AGKS14} for one ROABP
to the sum of constantly many ROABPs.
In the blackbox model we are only allowed to evaluate a polynomial
at various points.
Hence, for PIT, our task is to construct a hitting-set.

\begin{definition}\label{def:hittingset}
A set $H = H(n,d,w) \subseteq \F^n$ is a {\em hitting-set for ROABPs\/},
if for every nonzero $n$-variate polynomial~$A(\x)$ of individual degree~$d$
that can be computed by ROABPs of width~$w$,
there is a point $\a \in H$ such that $A(\a) \not= 0$.

For polynomials computed by a sum of~$c$  ROABPs,
a hitting-set is defined similarly.
Here,  $H = H(n,d,w,c)$ additionally depends on~$c$.
\end{definition}

For a hitting-set to exist,
we will need enough points in the underlying field~$\F$.
Henceforth, we will assume that the field~$\F$ is large enough
such that the constructions below go through 
(see \cite{AJ86} for constructing large $\F$).
To construct a hitting-set for a sum of ROABPs 
we use the concept of {\em low support rank concentration\/} defined 
by Agrawal, Saha, and Saxena~\cite{ASS13}.
A polynomial~$A(\x)$  has  low support concentration
if  the coefficients of its monomials of low support span  the coefficients
of all the monomials. 

\begin{definition}[\cite{ASS13}]
\label{def:concentration}
A polynomial $A(\x)$ has $\ell$-{\em support concentration\/}
if for all monomials~$\xa$ of~$A(\x)$, we have,
\[
\coeff_A(\xa) \in \Span_\F \{\coeff_A(\xb) \mid  \supp(\b) < \ell \}.
\]
\end{definition}

The above definition applies to polynomials over any $\F$-vector space, e.g.\ 
$\F[\x]$, $\F^w[\x]$ or $\F^{w \times w}[\x]$.
Thus, $A(\x) \in \F[\x]$ is a non-zero polynomial that has $\ell$-support concentration
if and only if there are nonzero coefficients of support~$< \ell$.
An $\ell$-concentrated polynomial in $\F(\x)$ has the following hitting set.

\begin{lemma}[\cite{ASS13}]
\label{lem:hsFromlConc}
For $n,d,\ell$,
 the set~$H = \{ \h \in \{0,\beta_1,\dots,\beta_d\}^n \mid \supp(\h)< \ell\}$ 
of size $(n d)^{O(\ell)}$
is a hitting-set  for all $n$-variate $\ell$-concentrated polynomials 
$A(\x) \in \F[\x]$ of individual degree~$d$,
where $\{\beta_i\}_i$ are distinct nonzero elements in $\F$.
\end{lemma}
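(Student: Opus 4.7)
The plan is to exploit the $\ell$-concentration to reduce blackbox testing of $A$ to hitting a polynomial in at most $\ell-1$ variables, and then cover every such small-variable restriction by a product grid inside $H$.

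First, I would observe that the $\ell$-concentration hypothesis forces $A$, if nonzero, to have a nonzero coefficient on some monomial of support strictly less than $\ell$. Indeed, Definition~\ref{def:concentration} says every coefficient lies in the $\F$-span of the coefficients of monomials of support~$< \ell$; if all such low-support coefficients were zero, the whole span would collapse and $A$ would vanish identically. So pick $\xb$ with $\coeff_A(\xb)\ne 0$ and $S := \supp(\b)$ satisfying $|S| < \ell$.

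Next, I would restrict $A$ to the variables in $S$. Define $A_S(\x)$ to be the polynomial obtained from $A(\x)$ by substituting $x_i = 0$ for every $i \notin S$; equivalently, $A_S$ is the sum of those monomials of $A$ whose support is contained in $S$. The monomial $\xb$ survives this restriction and has a nonzero coefficient, so $A_S$ is a nonzero polynomial in the $|S| < \ell$ variables indexed by $S$, still of individual degree~$d$.

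Now I would use the fact that $H$ contains, for every subset $S \subseteq [n]$ with $|S| < \ell$, the entire product grid $\{0,\beta_1,\dots,\beta_d\}^{S}$ (embedded in $\F^n$ by setting the coordinates outside $S$ to~$0$). Since the $\beta_i$ are distinct and nonzero, this is a grid of $d+1$ distinct values in each of the $|S|$ variables. A nonzero polynomial of individual degree $d$ in $|S|$ variables cannot vanish on such a grid: a standard inductive application of the univariate fact that a nonzero polynomial of degree $\le d$ has at most $d$ roots. Hence there is a point $\h$ of this grid at which $A_S(\h) \ne 0$, and by construction $A(\h) = A_S(\h) \ne 0$, with $\h \in H$.

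Finally, the size bound follows by counting: $|H| = \sum_{s=0}^{\ell-1} \binom{n}{s} d^s \le \ell \cdot n^{\ell-1} d^{\ell-1} = (nd)^{O(\ell)}$. There is no real obstacle in this argument; the only subtlety is ensuring that $\F$ is large enough to contain the $d$ distinct nonzero elements $\beta_1,\dots,\beta_d$, which is already assumed in the surrounding discussion.
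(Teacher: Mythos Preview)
Your proof is correct and is the standard argument: extract a nonzero low-support monomial from the concentration hypothesis, zero out the remaining variables, and hit the resulting $(\ell-1)$-variate polynomial on the full grid $\{0,\beta_1,\dots,\beta_d\}^{|S|}$ using the fact that a nonzero polynomial of individual degree $d$ cannot vanish on a $(d+1)$-point grid in each coordinate. The paper itself does not give a proof of this lemma; it is stated with a citation to~\cite{ASS13}, so there is nothing further to compare against.

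One minor notational remark: in the paper, $\supp(\a)$ denotes the support \emph{size} (a number), not the support set, so your line $S := \supp(\b)$ should really read $S := \{ i \mid b_i \neq 0 \}$ with $|S| = \supp(\b) < \ell$. This does not affect the mathematics.
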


Hence,
when we have low support concentration, this solves blackbox PIT.
Note that every polynomial does not have low support concentration,
for example $A(\x) = x_1x_2 \cdots x_n$ is not $n$-concentrated.
However, 
Agrawal, Saha, and Saxena~\cite{ASS13} showed that low support concentration can be achieved
through an appropriate shift of the variables. 

\begin{definition}
Let $A(\x)$ be an $n$-variate polynomial and $\f = (f_1,f_2, \dots, f_n) \in \F^n$.
The polynomial~$A$ {\em shifted by\/}~$\f$ is 
$A(\x + \f) = A(x_1+f_1, x_2+f_2, \dots, x_n+f_n)$.
\end{definition}
Note that a shift is an invertible process.
Therefore it preserves the coefficient space of a polynomial.

In the above example, 
we shift every variable by~$1$.
That is, we consider $A(\x + {\bf 1}) = (x_1+1)(x_2+1)\cdots(x_n+1)$.
Observe that~$A(\x + {\bf 1})$ has $1$-support concentration.
A polynomial $A(\x)$ can also be shifted by polynomials.
Then, $\f$ would be a tuple of $n$ polynomials.
Agrawal, Saha, and Saxena~\cite{ASS13} provide an efficient shift that
achieves low support concentration for polynomials computed by {\em set-multilinear depth-3 circuits\/}.
Here, a shift is efficient if $\f$ itself
can be computed in quasi-polynomial time and
$A(\x + \f)$ has a hitting set that is computable in quasi-polynomial time.
Forbes, Saptharishi and Shpilka~\cite{FSS14} extended their result to polynomials computed 
by ROABPs. However their cost is exponential in the individual degree of the polynomial. 

Any efficient shift that achieves low support concentration
for ROABPs will suffice for our purposes. 
In Section~\ref{sec:conc}, we will give a new shift for ROABPs with quasi-polynomial cost.  
Namely,
in Theorem~\ref{thm:kronecker} below we present a shift polynomial~$\f(t) \in \F[t]^n$ 
in one variable~$t$ of degree~$(ndw)^{O(\log n)}$ 
that can be computed in time~$(ndw)^{O(\log n)}$.
It has the property that for every $n$-variate polynomial~$A(\x) \in \F^{w \times w}[\x]$ of individual degree~$d$
that can be computed by  an ROABP of width~$w$,
the shifted polynomial $A(\x + \f(t))$ has 
$O(\log w)$-concentration. 
We can plug in as many values for~$t \in \F$ as the degree of~$\f(t)$,
i.e.\ $(ndw)^{O(\log n)}$ many.
For at least one value of~$t$, the shift~$\f(t)$ will $O(\log w)$-concentrate
$A(\x + \f(t))$.
That is,
we consider~$\f(t)$ as a family of shifts.
The same shift also works when the ROABP computes a polynomial
in $\F[\x]$ or $\F^{1 \times w}[\x]$.

The rest of the paper is organized as follows.
The construction of a shift to obtain low support concentration for single ROABPs
is postponed to Section~\ref{sec:conc}.
We start in Section~\ref{sec:sumOfC-bb} to show how the shift for a single ROABP
can be applied to obtain a shift for the sum of constantly many ROABPs.


\subsection{Sum of ROABPs}
\label{sec:sumOfC-bb}

We will first give a hitting set for the sum of two ROABPs, $A + B$.
We will then extendthis result for the sum of $c$ ROABPs.
Let polynomial $A \in \F[\x]$ of individual degree~$d$ have  an ROABP of width~$w$,
with variable order $(x_1, x_2, \dots, x_n)$.
Let $B\in \F[\x]$ be another polynomial.
We start by reconsidering the whitebox test from the previous section.
The dependency equations~(\ref{eq:depA}) and~(\ref{eq:depB})
were used to construct an ROABP for $B\in \F[\x]$ in the same variable order as for~$A$,
and the same width.
If this succeeds,
then the polynomial $A + B$ has one ROABP of width~$2w$.
Since there is already a blackbox PIT for one ROABP~\cite{AGKS14},
we are done in this case.

Hence, the interesting case that remains is
when the dependency equations~(\ref{eq:depA}) for~$A$ do not carry over to~$B$ as in equation~(\ref{eq:depB}).
Let $k \in [n]$ be the first such index.
In the following Lemma~\ref{lem:indepRi} we decompose~$A$ and~$B$
into a common ROABP $R$ up to layer~$k$,
and the remaining different parts $P$ and $Q$.
That is, for $\y_k = (x_1,x_2, \dots, x_k)$ and $\z_k = (x_{k+1}, \dots, x_n)$,
we obtain $A = RP$ and $B = RQ$,
where $R \in \F[\y_k]^{1 \times w'}$  
and $P,Q \in \F[\z_k]^{w' \times 1}$,
for some $w' \leq w(d+1)$.
The construction we give is such that that the coefficient space of~$R$ has full rank~$w'$.
Since the dependency equations~(\ref{eq:depA}) for~$A$ do not fulfill equation~(\ref{eq:depB}) for~$B$,
we get a constant vector $\Gamma \in \F^{1 \times w'}$ such that
$\Gamma P =0$ but $\Gamma Q \neq 0$. 
From these properties,
we will see in Lemma~\ref{lem:blackBoxSumOfTwo} below
that we get low support concentration for~$A+B$
when we use the shift constructed in Section~\ref{sec:conc} for one ROABP.

\begin{lemma}[Common ROABP $R$]
\label{lem:indepRi}
Let $A(\x)$ be a polynomial of individual degree~$d$,
computed by an ROABP of width~$w$ in variable order  $(x_1,x_2, \dots, x_n)$.
Let~$B(\x)$ be another polynomial for which 
there does {\em not\/} exist an ROABP of width~$w$ in the same variable order.

Then there exists a $k \in [n]$ such that for some $w' \leq w(d+1)$,
there are polynomials
$R \in \F[\y_k]^{1 \times w'}$ and $P,Q \in \F[\z_k]^{w' \times 1}$, 
 such that
\begin{enumerate}
\item 
$A = RP$ and $B = RQ$,
\item 
there exists a vector $\Gamma \in \F^{1 \times w'}$ with $\supp(\Gamma) \leq w+1$
such that $\Gamma P = 0$
and $\Gamma Q \neq 0$,
\item 
the coefficient space of~$R$ has full rank~$w'$.
\end{enumerate}
\end{lemma}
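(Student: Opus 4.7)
The plan is to locate the first layer $k$ where $B$ cannot follow the ROABP structure of $A$, factor both polynomials through the common ``prefix'' obtained from the first $k-1$ layers of the ROABP for $A$, and then read off the witness $\Gamma$ from the first failing dependency. Since $B$ has no width-$w$ ROABP in the variable order $(x_1,\dots,x_n)$, the characterizing dependencies of $A$ constructed in Section~\ref{sec:twoROABP} cannot all carry over to $B$, for otherwise Lemma~\ref{lem:dimROABP} would fabricate such an ROABP for $B$. Let $k \in [n]$ be the smallest index at which some dependency from equation~(\ref{eq:depA}) fails for $B$: there exist $\b^{*} \in \depending_k(A)$ and scalars $\{\gamma^{*}_\a\}_{\a \in \spanning_k(A)}$ with $\coeffset{A}{\y_k}{\b^{*}} - \sum_{\a} \gamma^{*}_\a \coeffset{A}{\y_k}{\a} = 0$, while the same combination is nonzero for $B$.

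Because all dependencies hold for $B$ at layers $1,\dots,k-1$, the inductive construction in the proof of Lemma~\ref{lem:dimROABP} provides matrices $D_1,\dots,D_{k-1}$ for which $A = D_1 \cdots D_{k-1}\, V_{k-1}^{A}$ and $B = D_1 \cdots D_{k-1}\, V_{k-1}^{B}$, where $V_{k-1}^{X}$ is the column vector with $i$-th entry $\coeffset{X}{\y_{k-1}}{\a_{k-1,i}}$ for $X \in \{A,B\}$. I further expand each entry via $\coeffset{X}{\y_{k-1}}{\a_{k-1,i}} = \sum_{j=0}^d x_k^{j}\, \coeffset{X}{\y_k}{(\a_{k-1,i},j)}$, and index $\depending_k(A) = \{\b_{k,\ell} = (\a_{k-1,i_\ell}, j_\ell)\}_{\ell=1}^{w'}$ with $w' = |\depending_k(A)| \leq w(d+1)$, setting
\[
R_\ell(\y_k) = (D_1 \cdots D_{k-1})(1, i_\ell)\, x_k^{j_\ell}, \qquad P_\ell = \coeffset{A}{\y_k}{\b_{k,\ell}}, \qquad Q_\ell = \coeffset{B}{\y_k}{\b_{k,\ell}}.
\]
This yields property~1 immediately. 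For property~3, Corollary~\ref{cor:coeffspaceP} gives $\coeff_{D_1 \cdots D_{k-1}}(\y_{k-1}^{\a_{k-1,i}}) = \e_i$, from which I can read off $\coeff_R(\y_k^{\b_{k,\ell}}) = \e_\ell \in \F^{1 \times w'}$ (any other index $\ell'$ either has the wrong $x_k$-degree or hits a zero coordinate of $\e_{i_\ell}$), so the coefficient space of $R$ contains the full standard basis.

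Property~2 is then read off from the failing dependency: letting $\ell^{*}$ be the index of $\b^{*}$ and $\ell_\a$ the index of each $\a \in \spanning_k(A) \subseteq \depending_k(A)$, I define $\Gamma$ by $\Gamma_{\ell^{*}} = 1$, $\Gamma_{\ell_\a} = -\gamma^{*}_\a$, and $\Gamma_\ell = 0$ elsewhere. Then $|\supp(\Gamma)| \leq 1 + |\spanning_k(A)| \leq w+1$, and $\Gamma P$, $\Gamma Q$ are precisely the two expressions forced to be zero and nonzero by the minimal choice of~$k$. The step that needs to be handled carefully is the claim that $D_1,\dots,D_{k-1}$ simultaneously express both $A$ and $B$; this is the content of the ``Constructing ROABP for $B$ in the same sequence as $A$'' observation in Section~\ref{sec:twoROABP}, since each $D_i$ is determined purely by the layer-$i$ dependency coefficients $\gamma$, so when those coefficients are valid for $B$ through layer $k-1$, the same matrices work for $B$ as well.
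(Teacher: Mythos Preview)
Your proof is correct and follows essentially the same approach as the paper. The paper packages your entry-wise definition of $R$ as $R = D_1\cdots D_{k-1}\,E_k$ with $E_k = I_w \otimes [x_k^0\;x_k^1\;\cdots\;x_k^d]$, but this is the same object as your $R_\ell = (D_1\cdots D_{k-1})(1,i_\ell)\,x_k^{j_\ell}$; the verification of full rank via Corollary~\ref{cor:coeffspaceP} and the extraction of $\Gamma$ from the first failing dependency are identical.
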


\begin{proof}
Let $D_1, D_2, \dots, D_n$ be the matrices constructed in Lemma~\ref{lem:dimROABP} for~$A$. 
Assume again w.l.o.g.\ that 
$\spanning_k(A) = \{ \a_{k,1},\a_{k,2}, \dots, \a_{k,w} \}$ has size~$w$
for each $1 \leq k \leq n-1$,
and $\spanning_n(A) = \{ \a_{n,1} \}$.
Then we have $D_1 \in \F^{1 \times w}[x_1]$, $D_n \in \F^{w \times 1}[x_n]$
and $D_i \in \F^{w \times w}[x_i]$, for $2 \leq i \leq n-1$.

In the proof of Lemma~\ref{lem:dimROABP} we consider the dependency equations for~$A$
and carry them over to~$B$.
By the assumption of the lemma,
there is no ROABP of width~$w$ for~$B$ now.
Therefore there is a smallest $k \in [n]$
where a dependency for~$A$ is not followed by~$B$.
That is,
the coefficients~$\gamma_{\a}$ computed for equation~(\ref{eq:depA}) 
do not fulfill equation~(\ref{eq:depB}) for~$B$.
Since the dependencies carry over up to this point,
the construction of the matrices $D_1, D_2, \dots, D_{k-1}$ work out fine for~$B$.
Hence,
by equation~(\ref{eq:D1-k}), 
we can write
\begin{eqnarray}
 A(\x) &=& D_1\, D_2 \cdots D_{k-1} \,
[\coeffset{A}{\y_{k-1}}{\a_{{k-1},1}} \, \coeffset{A}{\y_{k-1}}{\a_{{k-1},2}} \, \cdots \, \coeffset{A}{\y_{k-1}}{\a_{{k-1},w}} ]^T 
\label{eq:Ak} \\
 B(\x) &=& D_1\, D_2 \cdots D_{k-1}  \,
[\coeffset{B}{\y_{k-1}}{\a_{{k-1},1}} \, \coeffset{B}{\y_{k-1}}{\a_{{k-1},2}} \, \cdots \, \coeffset{B}{\y_{k-1}}{\a_{{k-1},w}} ]^T 
\label{eq:Bk}
\end{eqnarray}

Since the difference between~$A$ and~$B$ occurs at~$x_k$,
we consider all possible extensions from~$x_{k-1}$.
That is,
by equation~(\ref{eq:Apartk}),
for every $i \in  [w]$ we have
\begin{eqnarray}
\coeffset{A}{\y_{k-1}}{\a_{{k-1},i}} &=& \sum_{j=0}^d \coeffset{A}{\y_{k}}{ (\a_{{k-1},i}, j) } x_{k}^j \,.
\label{eq:partk1}
\end{eqnarray}

Recall that our goal is to decompose polynomial~$A$ into $A = R P$.
We first define  polynomial~$P$ as 
the vector of coefficient polynomials of all the one-step extensions of $\spanning_{k-1}(A)$, 
i.e.,
$P = \left(\coeffset{A}{\y_{k}}{ (\a_{{k-1},i}, j) }\right)_{1 \leq i \leq w,~ 0 \leq j \leq d}$
is  of length $w' = w(d+1)$.
Written explicitly,
this is
\[
P = [\coeffset{A}{\y_{k}}{ (\a_{{k-1},1}, 0) } \cdots \coeffset{A}{\y_{k}}{ (\a_{{k-1},1}, d)} ~\cdots~ 
\coeffset{A}{\y_{k}}{ (\a_{{k-1},w}, 0) }  \cdots \coeffset{A}{\y_{k}}{ (\a_{{k-1},w}, d) }]^T \,.
\]
To define~$R \in \F[\y_k]^{1 \times w'} $,
let~$I_w$ be the $w \times w$ identity matrix.
Define matrix $E_k \in \F[x_k]^{w \times w'}$
as the tensor product
\[
E_k = I_{w} \otimes \sqbrace{x_k^0 \; x_k^1 \, \cdots \, x_k^d}\,.
 \]
From equation~(\ref{eq:partk1}) we get that
\[
[\coeffset{A}{\y_{k-1}}{\a_{{k-1},1}} \cdots \coeffset{A}{\y_{k-1}}{\a_{{k-1},w}}]^T
=
E_k\, P.
\]

Thus, equation~(\ref{eq:Ak}) can be written as
$ A(\x) = D_1\, D_2 \cdots D_{k-1} E_k P$.
Hence, 
when we define 
\[
R(\y_k)  = D_1\, D_2 \cdots D_{k-1} E_k  
\]
then we have $A = R P$ as desired.
By an analogous argument we get
$B = R Q$ for 
$Q = \left(\coeffset{B}{\y_{k}}{ (\a_{{k-1},i}, j) }\right)_{1 \leq i \leq w,~ 0 \leq j \leq d}$.

For the second claim of the lemma
let $\b \in \depending_k(A)$ such that the dependency equation~(\ref{eq:depA}) for~$A$ is fulfilled,
but not equation~(\ref{eq:depB}) for~$B$.
Define $\Gamma \in \F^{1 \times w'}$ to be the vector 
that has the values~$\gamma_{\a}$ used in equation~(\ref{eq:depA})
at the position where~$P$ has entry~$A_{(\y_k,\a)}$,
and zero at all other positions.
Then $\supp(\Gamma) \leq w+1$ and we have
$\Gamma P = 0$ and $\Gamma Q \neq 0$.

It remains to show that the coefficient space of~$R$ has full rank. 
By Corollary~\ref{cor:coeffspaceP},
the coefficient space of~$D_1\, D_2 \cdots D_{k-1}$ has full rank~$w$.
Namely, for any $\ell \in [w]$,
the coefficient of the monomial~$\y_{k-1}^{\a_{k-1,\ell}}$ is~$\e_{\ell}$,
the $\ell$-th standard unit vector.
Therefore the coefficient of $R(\y_k)  = D_1\, D_2 \cdots D_{k-1} E_k$
at monomial $\y_k^{(\a_{k-1,\ell},j)}$ is
\[
\coeff_R(\y_{k}^{\a_{{k-1},\ell},j} ) = \e_{\ell}\, \coeff_{E_k}(x_k^j),
\]
for $1\leq \ell \leq w$ and $0 \leq j \leq d$.
By the definition of~$E_k$, we get  
$\coeff_R(\y_{k}^{\a_{{k-1},\ell},j} )= e_{(\ell-1)(d+1) +j +1}$.
Thus, the coefficient space of~$R$ has full rank~$w'$.
\end{proof}

Lemma~\ref{lem:indepRi} provides the technical tool to obtain
low support concentration for the sum of several ROABPs
by the shift developed for a single ROABP.
We start with the case of the sum of two ROABPs.

\begin{lemma}
\label{lem:blackBoxSumOfTwo}
Let $A(\x)$ and $B(\x)$ be two $n$-variate polynomials of individual degree~$d$,
each computed by an  $\ROABP$ of width~$w$.
Define $W_{w,2} = (d+1)(2w)^2$ and $\ell_{w,2} = \log( W_{w,2}^2 + 1)$.
Let $\f_{w,2}(t) \in \F[t]^n$ be a shift that $\ell_{w,2}$-concentrates
any polynomial (or matrix polynomial)
that is computed by an $\ROABP$ of width~$\leq W_{w,2}$.

Then $(A + B)' = (A + B)(\x + \f_{w,2})$ is $2\,\ell_{w,2}$-concentrated.
\end{lemma}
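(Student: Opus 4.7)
My proof plan is a case analysis on the structure of $B$ relative to the variable order of $A$. If $B$ admits an \ROABP{} of width $w$ in the variable order of $A$, then $A+B$ is computed by a single \ROABP{} of width $2w \le W_{w,2}$, so by the hypothesis on $\f_{w,2}$, $(A+B)'$ is $\ell_{w,2}$-concentrated (hence $2\ell_{w,2}$-concentrated). I focus on the nontrivial case below.

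In this case Lemma~\ref{lem:indepRi} provides a common-prefix decomposition $A = RP$ and $B = RQ$ with $R \in \F[\y_k]^{1 \times w'}$ computed by an \ROABP{} of width $w' = w(d+1) \le W_{w,2}$ having full-rank coefficient space, together with a constant vector $\Gamma \in \F^{1 \times w'}$ satisfying $\Gamma P = 0$ but $\Gamma Q \neq 0$. The pivotal observation is that $(A+B)'$ is a \emph{scalar} polynomial, so $2\ell_{w,2}$-concentration reduces to the following witness condition: either $A+B = 0$, or some $\coeff_{(A+B)'}(\x^{\b}) \neq 0$ with $\supp(\b) < 2\ell_{w,2}$. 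I assume $A + B \neq 0$ and exhibit such a $\b$.

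The nonzero scalar polynomial $\Gamma (P+Q) = \Gamma Q$ is itself computed by an \ROABP{} of width $w \le W_{w,2}$: writing $Q = M_Q S'$ with $S'$ the suffix \ROABP-vector of $B$, we have $\Gamma Q = (\Gamma M_Q) \cdot S'$, so prepending the row $\Gamma M_Q \in \F^{1 \times w}$ to the \ROABP{} for $S'$ suffices. Hence $(\Gamma Q)' := (\Gamma Q)(\z_k + \f_z)$ is an $\ell_{w,2}$-concentrated nonzero scalar polynomial in $\z_k$, so there exists $\a_2$ with $\supp(\a_2) < \ell_{w,2}$ and $\coeff_{(\Gamma Q)'}(\z^{\a_2}) \neq 0$. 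Independently, $R' := R(\y_k + \f_y)$ is $\ell_{w,2}$-concentrated as a vector polynomial (since $w' \le W_{w,2}$), and combined with $R$'s full-rank coefficient space (preserved under the invertible shift), the low-support vectors $\{\coeff_{R'}(\y^{\b_1^{(i)}}) : \supp(\b_1^{(i)}) < \ell_{w,2}\}$ span $\F(t)^{1 \times w'}$. Writing $\Gamma = \sum_i c_i \coeff_{R'}(\y^{\b_1^{(i)}})$ and using the bilinear factorization $\coeff_{(A+B)'}(\y^{\a_1}\z^{\a_2}) = \coeff_{R'}(\y^{\a_1}) \cdot \coeff_{(P+Q)'}(\z^{\a_2})$, we obtain
\[
\sum_i c_i\,\coeff_{(A+B)'}(\y^{\b_1^{(i)}}\z^{\a_2}) \;=\; \Gamma \cdot \coeff_{(P+Q)'}(\z^{\a_2}) \;=\; \coeff_{(\Gamma Q)'}(\z^{\a_2}) \;\neq\; 0,
\]
so some $\coeff_{(A+B)'}(\y^{\b_1^{(i)}}\z^{\a_2}) \neq 0$ with total support $\supp(\b_1^{(i)}) + \supp(\a_2) < 2\ell_{w,2}$, providing the required witness.

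The main obstacle is that a naive approach of shifting $P$ and $Q$ individually gives two $\ell_{w,2}$-concentrated vector polynomials whose sum is generally not $\ell_{w,2}$-concentrated, since the low-support dependencies for $P'$ and $Q'$ need not agree. My plan circumvents this by (i) reducing scalar-valued $\ell$-concentration to a single-witness question, and (ii) combining the specific vector $\Gamma$ from Lemma~\ref{lem:indepRi} with the full-rank span property of $R'$'s low-support coefficients to certify a nonzero low-support coefficient of $(A+B)'$ directly.
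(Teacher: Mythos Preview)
Your overall strategy matches the paper's proof: the case split on whether $B$ has a width-$w$ ROABP in $A$'s variable order, the invocation of Lemma~\ref{lem:indepRi} for the decomposition $A=RP$, $B=RQ$, the use of $\Gamma$ to locate a low-support nonzero coefficient of $(\Gamma Q)'$, and then expressing $\Gamma$ in the span of the low-support coefficients of $R'$ (using that $R$ has full-rank coefficient space) to produce a low-support witness in $(A+B)'$.

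There is, however, a genuine gap in your width bound for $\Gamma Q$. You write $Q = M_Q S'$ with $S'$ ``the suffix \ROABP-vector of $B$'' and $M_Q$ a constant matrix, concluding that $\Gamma Q$ has a width-$w$ ROABP. This factorization does not exist in general. The entries of $Q$ are the polynomials $\coeffset{B}{\y_k}{\a}$ for $\a \in \depending_k(A)$, where $\y_k = (x_1,\ldots,x_k)$ is a prefix of $A$'s variable order, \emph{not} of $B$'s. Since the whole point of the nontrivial case is that $B$ has a different variable order, the variables of $\y_k$ are scattered across $B$'s layers, and the $Q_i$'s are obtained by substituting different constant matrices at \emph{multiple interior} layers of $B$'s ROABP; they do not share a common suffix vector $S'$, nor do they lie in a $w$-dimensional $\F$-span.

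The correct argument (as in the paper) uses Lemma~\ref{lem:pdROABP}: each $Q_i$ individually has a width-$w$ ROABP in the order inherited from $B$, and since Lemma~\ref{lem:indepRi} guarantees $\supp(\Gamma) \le w+1$, the linear combination $\Gamma Q = \sum_i \gamma_i Q_i$ is a sum of at most $w+1$ such ROABPs, hence has a single ROABP of width $w(w+1)$. This is still $\le W_{w,2} = (d+1)(2w)^2$, so the hypothesis on $\f_{w,2}$ applies and the rest of your argument goes through unchanged. Note that you never invoked the bound $\supp(\Gamma)\le w+1$; it is precisely what makes this step work.
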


\begin{proof}
If $B$ can be computed by an ROABP of width~$w$ in the same variable order as the one for~$A$,
then there is an ROABP of width~$2w$ that computes $A+B$.
In this case, the lemma follows because $2w \leq W_{w,2}$.
So let us assume that there is no such ROABP for~$B$.
Thus the assumption from Lemma~\ref{lem:indepRi} is fulfilled.
Hence,
we have a decomposition of~$A$ and~$B$ at the $k$-th layer into $A(\x) = R(\y_k)P(\z_k)$ and $B(\x) = R(\y_k)Q(\z_k)$,
and there is a vector~$\Gamma \in \F^{1 \times w'}$ 
such that $\Gamma P = 0$ and $\Gamma Q \not= 0$,
where $w' = (d+1)w$ and $\supp(\Gamma) \leq w+1$.

Define
$R',P',Q'$ as the polynomials $R,P,Q$ shifted by~$\f_{w,2}$, respectively.
Since $\Gamma P = 0$, we also have $\Gamma P' = 0$.

By the definition of~$R$,
there is an ROABP of width~$w'$ that computes~$R$.
Since $w' \leq W_{w,2}$, 
polynomial~$R'$ is $\ell_{w,2}$-concentrated by the assumption of the lemma.

We argue that also $\Gamma Q'$ is $\ell_{w,2}$-concentrated:
let $Q = [Q_1 \, Q_2  \cdots  Q_{w'}]^T\in \F[\z_k]^{w' \times 1}$.
By Lemma~\ref{lem:pdROABP},
from the ROABP for~$B$ we get an ROABP for each~$Q_i$ of the same width~$w$
and the same variable order.
Therefore we can combine them into one ROABP that computes
$\Gamma Q = \sum_{i=1}^{w'} \gamma_i Q_i$.
Its width is~$w(w+1) $ because $\supp(\Gamma) \leq w+1$.
Since $w(w+1) \leq W_{w,2}$, 
polynomial $\Gamma Q'$ is $\ell_{w,2}$-concentrated.

Since $\Gamma Q \ne 0$
and $\Gamma Q'$ is $\ell_{w,2}$-concentrated,
there exists at least one $\b \in \{0,1, \dots, d\}^{n-k}$ with $\supp(\b) < \ell_{w,2}$ 
such that $\Gamma \coeff_{Q'} (\zkb) \ne 0$.
Because $\Gamma P = 0$,
we have 
$\Gamma \coeff_{P'}(\zkb) = 0$,
and therefore
\begin{equation}
\label{eq:nonZeroMonomial}
 \Gamma \coeff_{P' + Q'}(\zkb) \ne 0.
\end{equation}

Recall that the coefficient space of~$R$ has full rank~$w'$.
Since a shift preserves the coefficient space,
$R'$ also has a full rank coefficient space.
Because~$R'$ is $\ell_{w,2}$-concentrated,
already the coefficients of the $<\ell_{w,2}$-support monomials of~$R'$ have full rank~$w'$.
That is, for 
$M_{\ell_{w,2}} = \{\a \in \{0,1, \dots,d\}^k \mid \supp(\a) < \ell_{w,2}\}$, 
we have
$\rank_{\F(t)}\{\coeff_{R'}(\yka) \mid \a \in M_{\ell_{w,2}} \} = w'$.
Therefore, we can express~$\Gamma$ as a linear combination of these coefficients,
\[
 \Gamma = \sum_{\a \in M_{\ell_{w,2}}} \alpha_\a  \coeff_{R'} (\yka),
\]
where $\alpha_\a$ is a rational function in~$\F(t)$, for $\a \in M_{\ell_{w,2}}$.
Hence, from equation~(\ref{eq:nonZeroMonomial}) we get
\begin{eqnarray*}
\Gamma \coeff_{(P' + Q')}(\zkb) 
&=& \left(\sum_{\a \in M_{\ell_{w,2}}} \alpha_\a  \coeff_{R'} (\yka) \right)  \coeff_{P' + Q'}(\zkb) \\
&=& \sum_{\a \in M_{\ell_{w,2}}} \alpha_\a  \coeff_{R' (P' + Q')} (\yka\, \zkb) \\
&=& \sum_{\a \in M_{\ell_{w,2}}} \alpha_\a \coeff_{(A+B)'} (\x ^{(\a, \b)})\\
&\not=& 0\,.
\end{eqnarray*}
Since $\supp(\a , \b) = \supp(\a) + \supp(\b) <  2 \ell_{w,2}$,
it follows that there is a monomial in~$(A+B)'$ of support $< 2\ell_{w,2}$  with a nonzero coefficient.
In other words, $(A+B)'$ is $2\ell_{w,2}$-concentrated.
\end{proof}

In Section~\ref{sec:conc}, Theorem~\ref{thm:kronecker},
we will show that the shift polynomial~$\f_{w,2}(t) \in \F[t]^n$
used in Lemma~\ref{lem:blackBoxSumOfTwo}
can be computed in time $ (ndw)^{O(\log n)}$.
The degree of~$\f_{w,2}(t)$ is also $(ndw)^{O(\log n)}$.
Recall that when we say that we shift by~$\f_{w,2}(t)$,
we actually mean that we plug in values for~$t$ up to the degree of~$\f_{w,2}(t)$.
That is,
we have a family of $ (ndw)^{O(\log n)}$ shifts,
and at least one of them will give low support concentration.
By Lemma~\ref{lem:hsFromlConc}, we get for each~$t$, a potential
hitting-set $H_t$ of size $(n d)^{O(\ell_{w,2})} = (n d)^{O(\log d w)}$,
\[
H_t = \{ \h + \f(t)  \mid \h \in \{0,\beta_1,\dots,\beta_d\}^n \text{ and } \supp(\h) < 2\ell_{w,2}\}\,.
\]
The final hitting-set is the union of all these sets, 
i.e.\ $H = \bigcup_{t} H_t$,
where $t$ takes $(ndw)^{O(\log n)}$ distinct values.
Hence, we have the following main result.

\begin{theorem}
\label{thm:blackbox2}
Given $n,d,w$,
in time $(ndw)^{O(\log ndw)}$ one can construct a hitting-set for
all  $n$-variate polynomials of individual degree~$d$,
that can be computed by a sum of two  ROABPs of width~$w$. 
\end{theorem}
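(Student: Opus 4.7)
The proof assembles three results: Lemma~\ref{lem:blackBoxSumOfTwo} (which reduces low-support concentration for a sum of two width-$w$ ROABPs to concentration for a single ROABP of the larger width $W_{w,2}=(d+1)(2w)^2$), Theorem~\ref{thm:kronecker} to be established in Section~\ref{sec:conc} (a quasi-polynomial concentrating shift for a single ROABP), and Lemma~\ref{lem:hsFromlConc} (the canonical hitting set for low-support-concentrated polynomials). Nothing more is needed once these three are in hand.

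First I would instantiate Theorem~\ref{thm:kronecker} with width parameter $W_{w,2}$ to obtain a univariate family of shifts $\f_{w,2}(t)\in\F[t]^n$, both of degree and construction time $(ndw)^{O(\log n)}$, that $\ell_{w,2}$-concentrates every polynomial (or matrix polynomial) computed by an ROABP of width at most $W_{w,2}$, where $\ell_{w,2}=\log(W_{w,2}^2+1)=O(\log dw)$. Feeding this shift into Lemma~\ref{lem:blackBoxSumOfTwo} yields that, viewed over $\F(t)$, the shifted sum $(A+B)(\x+\f_{w,2}(t))$ is $2\ell_{w,2}$-concentrated.

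Next I would apply Lemma~\ref{lem:hsFromlConc} over $\F(t)$: if $A+B\neq 0$, there exists $\h\in\{0,\beta_1,\dots,\beta_d\}^n$ with $\supp(\h)<2\ell_{w,2}$ such that $g(t):=(A+B)(\h+\f_{w,2}(t))$ is a nonzero element of $\F[t]$. Since $\deg_t g\leq nd\cdot\deg_t\f_{w,2}(t)=(ndw)^{O(\log n)}$, any set $T\subseteq\F$ of that many distinct values contains some $t_0$ with $g(t_0)\neq 0$. The final hitting set is then
\[
H=\bigcup_{t_0\in T}\bigl\{\,\h+\f_{w,2}(t_0)\ \big|\ \h\in\{0,\beta_1,\dots,\beta_d\}^n,\ \supp(\h)<2\ell_{w,2}\,\bigr\},
\]
whose cardinality (and construction time) is at most $|T|\cdot(nd)^{O(\ell_{w,2})}=(ndw)^{O(\log n)}\cdot(nd)^{O(\log dw)}=(ndw)^{O(\log ndw)}$, as required.

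All of the substantive work sits in Theorem~\ref{thm:kronecker} and Lemma~\ref{lem:blackBoxSumOfTwo}; here the only obstacles are bookkeeping ones. One must verify that $W_{w,2}$ is chosen large enough to cover both widths arising in the proof of Lemma~\ref{lem:blackBoxSumOfTwo} (namely $2w$ when $A,B$ can be fused into a single ROABP, and $w(w+1)$ for the polynomial $\Gamma Q$ coming from the decomposition in Lemma~\ref{lem:indepRi}), and that $\F$ contains at least $d+|T|$ distinct elements so that the $\beta_i$ and the evaluation points for $t$ can be chosen. After checking this, multiplying the two quasi-polynomial factors from Theorem~\ref{thm:kronecker} and Lemma~\ref{lem:hsFromlConc} yields the claimed $(ndw)^{O(\log ndw)}$ bound.
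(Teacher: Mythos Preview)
Your proposal is correct and follows essentially the same route as the paper: invoke Theorem~\ref{thm:kronecker} (with width parameter $W_{w,2}$) to obtain the shift, apply Lemma~\ref{lem:blackBoxSumOfTwo} to get $2\ell_{w,2}$-concentration of $(A+B)(\x+\f_{w,2}(t))$, then use Lemma~\ref{lem:hsFromlConc} and take a union over $(ndw)^{O(\log n)}$ values of~$t$. Your degree-bound argument for why finitely many $t$-values suffice is slightly more explicit than the paper's phrasing, and your bookkeeping list should also include the width $w'=w(d+1)$ of~$R$ from Lemma~\ref{lem:indepRi} (which is likewise bounded by $W_{w,2}$), but otherwise the arguments coincide.
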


We extend Lemma~\ref{lem:blackBoxSumOfTwo} to the sum of~$c$ ROABPs.

\begin{lemma}
\label{lem:blackBoxSumOfC}
Let $A = A_1 + A_2 + \cdots + A_c$,
where the $A_i$'s are $n$-variate polynomials of individual degree~$d$,
each computed by an $\ROABP$ of width~$w$.
Define $W_{w,c} = (d+1)\wt{c-1}$ and $\ell_{w,c} = \log( W_{w,c}^2+1)$.
Let $\f_{w,c}(t) \in \F[t]^n$ be a shift that $\ell_{w,c}$-concentrates
any polynomial (or matrix polynomial)
that is computed by an $\ROABP$ of width~$W_{w,c}$.

Then 
$A' = A(\x + \f_{w,c})$ is $c\, \ell_{w,c}$-concentrated.
\end{lemma}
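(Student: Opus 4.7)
The plan is to induct on $c$, with Lemma~\ref{lem:blackBoxSumOfTwo} as the base case $c=2$ (after observing that the definitions of $W_{w,2}$ and $\ell_{w,2}$ from the statement above coincide with those used there). For the inductive step I would write $A = A_1 + B$ with $B = A_2 + \cdots + A_c$, view $B$ as a single polynomial, and essentially replay the two-ROABP argument, now attacking the sum $B$ recursively.

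I would split on whether $B$ admits an ROABP of width $w$ in the variable order of $A_1$. If so, $A = A_1 + B$ has an ROABP of width $2w \le W_{w,c}$, and the hypothesis on $\f_{w,c}$ directly gives $\ell_{w,c}$-concentration of $A'$. Otherwise Lemma~\ref{lem:indepRi} applies to $A_1$ and $B$, yielding a layer index $k$ and a decomposition $A_1 = RP$, $B = RQ$, where $R \in \F[\y_k]^{1 \times w'}$ with $w' \le w(d+1) \le W_{w,c}$ has full-rank coefficient space, together with a vector $\Gamma \in \F^{1 \times w'}$ satisfying $\supp(\Gamma) \le w+1$, $\Gamma P = 0$, and $\Gamma Q \ne 0$.

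The key new step is to recognise that $\Gamma Q$ is itself a sum of $c-1$ ROABPs of width at most $2w^2$. By linearity in $B = \sum_{i \ge 2} A_i$, each coordinate of $Q$ (a coefficient of $B$ at a prefix-monomial of $A_1$'s order) decomposes as $\sum_{i \ge 2} Q^{(i)}_j$; by Lemma~\ref{lem:pdROABP} every coordinate of $Q^{(i)}$ has an ROABP of width $w$ in a common variable order inherited from $A_i$. Since $\supp(\Gamma) \le w+1$, combining these gives an ROABP for $\Gamma Q^{(i)}$ of width at most $w(w+1) \le 2w^2$. The crucial numerical coincidence is that
\[
W_{2w^2,\,c-1} \;=\; (d+1)\,(2 \cdot 2w^2)^{2^{c-2}} \;=\; (d+1)\,(2w)^{2^{c-1}} \;=\; W_{w,c},
\]
and similarly $\ell_{2w^2,c-1} = \ell_{w,c}$. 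Thus the given shift $\f_{w,c}$ satisfies exactly the hypothesis required to apply the induction hypothesis with parameters $(2w^2, c-1)$, yielding $(c-1)\ell_{w,c}$-concentration of $\Gamma Q' = \Gamma Q(\x + \f_{w,c})$.

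The finish mirrors Lemma~\ref{lem:blackBoxSumOfTwo}: since $R'$ is $\ell_{w,c}$-concentrated and shifts preserve coefficient spans, the full-rank coefficient space of $R'$ is already realised by the coefficients at support-$<\ell_{w,c}$ monomials, so I can expand $\Gamma = \sum_{\a} \alpha_\a \coeff_{R'}(\y_k^\a)$ summed over such $\a$. Picking $\b$ with $\supp(\b) < (c-1)\ell_{w,c}$ and $\Gamma\,\coeff_{Q'}(\z_k^\b) \ne 0$ (which exists because $\Gamma Q' \ne 0$ is $(c-1)\ell_{w,c}$-concentrated), and using $\Gamma P' = 0$, the identity $\Gamma\,\coeff_{(P'+Q')}(\z_k^\b) = \sum_\a \alpha_\a \coeff_{A'}(\x^{(\a,\b)}) \ne 0$ exhibits a nonzero coefficient of $A'$ at a monomial of support $< c\,\ell_{w,c}$. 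The main obstacle, and the reason $W_{w,c}$ is defined with the double-exponential $(2w)^{2^{c-1}}$, is to arrange that the identity $W_{2w^2,c-1} = W_{w,c}$ holds, so that a single shift $\f_{w,c}$ suffices uniformly at every recursion depth.
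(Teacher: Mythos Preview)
Your proposal is correct and follows essentially the same approach as the paper: induction on~$c$ with Lemma~\ref{lem:blackBoxSumOfTwo} as the base case, the decomposition of $A_1$ and $B=\sum_{i\ge 2}A_i$ via Lemma~\ref{lem:indepRi}, the observation that $\Gamma Q$ is a sum of $c-1$ ROABPs each of width at most $w(w+1)\le 2w^2$, the key identity $W_{2w^2,c-1}=W_{w,c}$ allowing the same shift $\f_{w,c}$ to serve at every level, and the finish that adds the $\ell_{w,c}$ support from $R'$ to the $(c-1)\ell_{w,c}$ support from $\Gamma Q'$. You have also correctly handled the degenerate case where $B$ already has a width-$w$ ROABP in the order of~$A_1$.
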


\begin{proof}
The proof is by induction on~$c$.
Lemma~\ref{lem:blackBoxSumOfTwo} provides the base case $c=2$.
For the induction step let $c \geq 3$.
We follow the proof of Lemma~\ref{lem:blackBoxSumOfTwo}
with $A = A_1$ and $B = \sum_{j = 2}^c A_j$.
Consider again the decomposition of~$A$ and~$B$ at the $k$-th layer into 
$A = RP$ and $B = RQ$,
and let~$\Gamma \in \F^{1 \times w'}$ such that $\Gamma P = 0$ and $\Gamma Q \not= 0$,
where $w' = (d+1)w$ and $\supp(\Gamma) \leq w+1$.

The only difference to the proof of Lemma~\ref{lem:blackBoxSumOfTwo} is~$Q = [Q_1 \, Q_2  \cdots  Q_{w'}]^T$.
Recall from Lemma~\ref{lem:indepRi} that 
$Q_i = \coeffset{B}{\y_k}{\a_i} = \sum_{j = 2}^c \coeffset{A_j}{\y_k}{\a_i}$, 
for $\a_i \in \depending_k(A)$.
Hence, 
\[
\Gamma Q 
=\sum_{i = 1}^{w'} \gamma_i \left(\sum_{j = 2}^c \coeffset{A_j}{\y_k}{\a_i} \right)
=\sum_{j = 2}^c \sum_{i = 1}^{w'} \gamma_i \coeffset{A_j}{\y_k}{\a_i} \,.
\]
By Lemma~\ref{lem:pdROABP},  $\Gamma Q$ can be computed by a sum of~$c-1$ ROABPs,
each of width~$w (w+1) \leq 2 w^2 = w''$, because $\supp(\Gamma) \leq w+1$.
Our definition of~$W_{w,c}$ was chosen such that 
\[
W_{w'',c-1} 
= (d+1)(2w'')^{2^{c-2}} 
=  (d+1)(2 \cdot 2w^2)^{2^{c-2}} 
= (d+1)(2w)^{2^{c-1}} 
=W_{w,c} \,.
\]
Hence, $\f_{w,c}(t)$ is a shift that $\ell_{w'',c-1}$-concentrates
any polynomial
that is computed by an $\ROABP$ of width~$W_{w'',c-1}$.
By the induction hypothesis,
we get that
$\Gamma Q' = \Gamma Q (\x +\f_{w,c}(t))$ is $(c-1)\,\ell_{w'',c-1}$-concentrated,
which is same as $(c-1)\,\ell_{w,c}$-concentrated.

Now we can proceed as in the proof of Lemma~\ref{lem:blackBoxSumOfTwo}
and get that
$(A + B)' = \sum_{j=1}^c A'_j$ has a monomial of support 
$< \ell_{w,c} + (c-1)\, \ell_{w,c} = c\, \ell_{w,c}$.
\end{proof}

We combine the lemmas similarly as for Theorem~\ref{thm:blackbox2}
and obtain our main result for the sum of constantly many ROABPs.

\begin{theorem}
\label{thm:blackbox}
Given $n,w, d$,
in time $(ndw)^{O(c \cdot 2^c \log ndw)}$ one can construct a hitting-set for
all  $n$-variate polynomials of individual degree~$d$,
that can be computed by the sum of~$c$  ROABPs of width~$w$. 
\end{theorem}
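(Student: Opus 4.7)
The plan is to combine the low-support concentration result of Lemma \ref{lem:blackBoxSumOfC} with the hitting-set construction of Lemma \ref{lem:hsFromlConc}, using the shift polynomial $\f_{w,c}(t)$ promised by Theorem \ref{thm:kronecker} (whose construction is deferred to Section \ref{sec:conc}). Essentially, Lemma \ref{lem:blackBoxSumOfC} reduces blackbox PIT for a sum of $c$ ROABPs to blackbox PIT for an $(c\,\ell_{w,c})$-support-concentrated polynomial, and Lemma \ref{lem:hsFromlConc} gives a hitting-set for the latter.

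First I would instantiate the parameters. Set $W_{w,c} = (d+1)(2w)^{2^{c-1}}$ and $\ell_{w,c} = \log(W_{w,c}^2+1)$ as in Lemma \ref{lem:blackBoxSumOfC}. By Theorem \ref{thm:kronecker} applied with width parameter $W_{w,c}$, there is a one-variable shift polynomial $\f_{w,c}(t) \in \F[t]^n$ that $\ell_{w,c}$-concentrates any polynomial computed by an ROABP of width $\le W_{w,c}$; both the degree of $\f_{w,c}(t)$ and the time to write it down are bounded by $(n d W_{w,c})^{O(\log n)}$. Since $\log W_{w,c} = O(2^c \log(dw))$, this bound simplifies to $(ndw)^{O(2^c \log n)}$.

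Next, applying Lemma \ref{lem:blackBoxSumOfC} to the given polynomial $A = A_1+\dots+A_c$, there exists at least one value $t_0 \in \F$ (out of the $\deg_t \f_{w,c}(t)$ many we can try) for which $A(\x + \f_{w,c}(t_0))$ is $(c\,\ell_{w,c})$-concentrated. By Lemma \ref{lem:hsFromlConc}, the set
\[
H_{t_0} = \{\, \h + \f_{w,c}(t_0) \mid \h \in \{0,\beta_1,\dots,\beta_d\}^n,\ \supp(\h) < c\,\ell_{w,c}\,\}
\]
of size $(nd)^{O(c\,\ell_{w,c})}$ hits the shifted polynomial, and since a shift is invertible, $H_{t_0}$ therefore hits $A(\x)$ itself. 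Of course we do not know $t_0$ in advance, so the final hitting-set is $H = \bigcup_t H_t$, the union taken over all $(ndw)^{O(2^c \log n)}$ candidate values of $t$.

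The last step is just a complexity estimate. Using $\ell_{w,c} = O(2^c \log(dw))$, one has $(nd)^{O(c\,\ell_{w,c})} = (ndw)^{O(c \cdot 2^c \log(ndw))}$. Multiplying by the $(ndw)^{O(2^c \log n)}$ factor for the choice of $t$ is absorbed into the same bound, giving a total hitting-set size (and construction time) of $(ndw)^{O(c\cdot 2^c \log(ndw))}$, as claimed. The only genuine obstacle in this outline is Theorem \ref{thm:kronecker} itself—that a single-ROABP shift with the stated cost and concentration property exists—which is established independently in Section \ref{sec:conc}; granted that, Theorem \ref{thm:blackbox} is simply the combination of Lemma \ref{lem:blackBoxSumOfC} and Lemma \ref{lem:hsFromlConc}.
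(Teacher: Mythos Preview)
Your proposal is correct and follows essentially the same route as the paper: the paper's proof of Theorem~\ref{thm:blackbox} is just one sentence saying to combine the lemmas as was done for Theorem~\ref{thm:blackbox2}, and your write-up spells out exactly that combination (Lemma~\ref{lem:blackBoxSumOfC} plus Theorem~\ref{thm:kronecker} plus Lemma~\ref{lem:hsFromlConc}, with the union over~$t$). Your parameter arithmetic leading to the $(ndw)^{O(c\cdot 2^c\log(ndw))}$ bound is also in line with the paper's intended estimate.
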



\subsection{Concentration in matrix polynomials}

As a by-product, we show that low support concentration can be 
achieved even when we have a sum of matrix polynomials, 
each computed by an ROABP.
For a matrix polynomial $A(\x) \in F^{w \times w}[\x]$,
an ROABP is defined similar to the standard case.
We have layers of nodes $V_0, V_1, \dots, V_n$ connected by directed edges 
from~$V_{i-1}$ to~$V_i$.
Here, $V_0 = \{ v_{0,1},v_{0,2}, \dots,v_{0,w} \}$ and
$V_n = \{ v_{n,1},v_{n,2}, \dots,v_{n,w} \}$ also consist of~$w$ nodes.
The polynomial~$A_{i,j}(\x)$ at position~$(i,j)$ in~$A(\x)$
is the polynomial computed by the standard ROABP with start node~$v_{0,i}$
and end node~$v_{n,j}$.

Note that Definition~\ref{def:concentration} for $\ell$-support concentration
can be applied to polynomials over any $\F$-algebra.

\begin{corollary}
\label{cor:concOverAlgebra}
Let $A = A_1 + A_2 + \cdots + A_c$,
where each $A_i \in \F^{w \times w}[\x]$ is an $n$-variate matrix polynomials of individual degree~$d$,
each computed by an $\ROABP$ of width~$w$.
Let $\f_{w,c}$ and $\ell_{w, c}$ be defined as in Lemma~\ref{lem:blackBoxSumOfC}.

Then $A(\x + \f_{w^2, c})$ is $c\ell_{w^2, c}$-concentrated.
\end{corollary}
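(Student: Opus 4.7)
The plan is to reduce directly to Lemma~\ref{lem:blackBoxSumOfC} by observing that a matrix polynomial $A_i \in \F^{w\times w}[\x]$ computed by a matrix-ROABP of (matrix) width $w$ is, from the standpoint of the coefficient-space/Nisan machinery that drives Lemma~\ref{lem:blackBoxSumOfC}, equivalent to an object of ``effective width" $w^2$. The shift $\f_{w^2,c}(t)$, which by Theorem~\ref{thm:kronecker} concentrates any (matrix) polynomial computed by an ROABP of width $W_{w^2,c}$, will then give the required concentration.

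First I would establish the analog of Lemma~\ref{lem:ROABPdim} for matrix polynomials: if $A_i = D_1 D_2 \cdots D_n$ with $D_j \in \F^{w\times w}[x_{\pi(j)}]$, then for any prefix $\y_k = (x_{\pi(1)},\dots,x_{\pi(k)})$ of the variable order, we factor $A_i = P(\y_k)\,Q(\z_k)$ with $P, Q \in \F^{w\times w}[\cdot]$, so that
\[
\coeffset{A_i}{\y_k}{\a} \;=\; \coeff_P(\y_k^{\a})\cdot Q(\z_k).
\]
Since every $w\times w$ matrix is an $\F$-linear combination of the $w^2$ elementary matrix units $E_{p,q}$, every such coefficient lies in $\Span_\F\{E_{p,q}Q(\z_k) \mid p,q\in[w]\}$, a space of $\F$-dimension at most $w^2$. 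Thus the Nisan/evaluation-dimension bound of $A_i$ is $\le w^2$ for every prefix.

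Next I would trace through the proof of Lemma~\ref{lem:blackBoxSumOfC} and check that nothing in it is specific to scalar-valued coefficients. The decomposition Lemma~\ref{lem:indepRi} splits $A_1 = RP$ and $\sum_{j\ge 2}A_j = RQ$ at the first layer where the characterizing dependencies of $A_1$ fail for the rest, and exhibits $\Gamma$ with $\Gamma P = 0$, $\Gamma Q\neq 0$; exactly the same construction applies to matrix polynomials once we use the Nisan bound $w^2$ in place of $w$, yielding an $R$ of bound $\le w^2(d+1)$. The support-counting step (producing a low-support monomial in $A'$ with nonzero coefficient) is purely algebraic and goes through with $\F^{w\times w}$ playing the role of the scalar field in the coefficient space. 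Therefore Lemma~\ref{lem:blackBoxSumOfC} applies verbatim to the matrix setting with parameter $w^2$ in place of $w$, giving $c\,\ell_{w^2,c}$-concentration of $A(\x+\f_{w^2,c})$.

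The main obstacle is simply the bookkeeping: one must check, layer by layer in the recursion of Lemma~\ref{lem:blackBoxSumOfC}, that the width-doubling recurrence $W_{w'',c-1} = W_{w,c}$ with $w'' = 2w^2$ still closes when we start from $w^2$ rather than $w$. This is immediate from the definition $W_{w,c} = (d+1)(2w)^{2^{c-1}}$: replacing $w$ by $w^2$ at the outset only changes the base of the tower, and the inductive identity relied on in Lemma~\ref{lem:blackBoxSumOfC} is preserved. With this verification in place, direct invocation of Lemma~\ref{lem:blackBoxSumOfC} with width parameter $w^2$ establishes the corollary.
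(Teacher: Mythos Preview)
Your plan has a real gap. The proof of Lemma~\ref{lem:blackBoxSumOfTwo} (and hence Lemma~\ref{lem:blackBoxSumOfC}) concludes $\ell$-concentration by exhibiting \emph{one} monomial of small support whose coefficient in $(A+B)'$ is nonzero. For a polynomial in $\F[\x]$ this is exactly $\ell$-concentration: the coefficient space is one-dimensional, so a single nonzero low-support coefficient spans everything. For a matrix polynomial $A\in\F^{w\times w}[\x]$, however, $\ell$-concentration (Definition~\ref{def:concentration}) demands that the low-support coefficients span the \emph{entire} coefficient space inside $\F^{w\times w}$, which can be up to $w^2$-dimensional. Producing one nonzero low-support matrix coefficient does not yield this. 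So the line ``the support-counting step \dots\ is purely algebraic and goes through with $\F^{w\times w}$ playing the role of the scalar field'' is precisely where the transplant fails: $\F^{w\times w}$ is not a field, and the one-dimensional conclusion does not upgrade to the $w^2$-dimensional one.

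The paper closes this gap differently, without reopening Lemma~\ref{lem:blackBoxSumOfC}. It observes that for every fixed $\alpha\in\F^{w\times w}$ the scalar polynomial $\langle\alpha,A_i\rangle$ is computed by an ordinary ROABP of width $w^2$ (take $w$ copies of the width-$w$ matrix ROABP and wire in the entries of $\alpha$ on the last layer). Hence $\langle\alpha,A\rangle$ is a sum of $c$ scalar ROABPs of width $w^2$, and Lemma~\ref{lem:blackBoxSumOfC} applied as a black box gives $c\,\ell_{w^2,c}$-concentration of $\langle\alpha,A\rangle(\x+\f_{w^2,c})$ for \emph{every} $\alpha$. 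A separate duality statement (Lemma~\ref{lem:vectorConc}) then says that $\ell$-concentration of $\langle\alpha,A'\rangle$ for all $\alpha$ is equivalent to $\ell$-concentration of the matrix polynomial $A'$. Quantifying over all $\alpha$ is exactly what supplies the missing span condition that your single-$\Gamma$ argument cannot. If you want to rescue your route, you would need to run the Lemma~\ref{lem:indepRi}/\ref{lem:blackBoxSumOfTwo} machinery once for each direction in $\F^{w\times w}$, which is essentially the paper's inner-product reduction in disguise.
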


\begin{proof}
Let $\alpha \in \F^{w \times w}$ and consider the dot-product 
$\abrace{\alpha, A_i} \in \F[\x]$.
This polynomial can be computed by an ROABP of width~$w^2$:
we take the ROABP of width~$w$ for~$A_i$ and make~$w$ copies of it,
and two new nodes~$s$ and~$t$.
We add the following edges.
\begin{itemize}
\item
Connect the new start node~$s$ to the $h$-th former start node
of the $h$-th copy of the ROABP
by edges of weight one,
for all $1 \le h \le w$.
\item
Connect the $j$-th former end node of the $h$-th copy of the ROABP
to the new end node~$t$ by an edge of weight~$\alpha_{h,j}$,
for all $1 \le h, j \le w$.
\end{itemize}
The resulting ROABP has width~$w^2$ and computes~$\abrace{\alpha, A_i}$.

Now consider the polynomial
$\abrace{\alpha, A} = \abrace{\alpha, A_1} + \abrace{\alpha, A_2} +
\cdots + \abrace{\alpha, A_c} $.
It can be computed by a sum of~$c$ ROABPs, each of width~$w^2$,
for every $\alpha \in \F^{w \times w}$.
Hence, by Lemma \ref{lem:blackBoxSumOfC},
the polynomial $\abrace{\alpha, A} (\x + \f_{w^2, c})$
is $c\ell_{w^2, c}$-concentrated, for every $\alpha \in \F^{w \times w}$.
By Lemma~\ref{lem:vectorConc} below,
it follows that
$A(\x + \f_{w^2, c})$ is $c\ell_{w^2, c}$-concentrated.
\end{proof}

The following lemma is also of independent interest.

\begin{lemma}
\label{lem:vectorConc}
Let~$A \in \F^{w \times w}[\x]$ be an $n$-variate polynomial 
and~$\f(t)$ be a shift.
Then
$A(\x + \f(t))$ is $\ell$-concentrated iff ~$\forall \alpha \in \F^{w \times w},$
 $\abrace{\alpha, A} (\x + \f(t))$ is $\ell$-concentrated.
\end{lemma}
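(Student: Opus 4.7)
The forward direction is immediate: if $B := A(\x + \f(t))$ is $\ell$-concentrated, every matrix coefficient $\coeff_B(\xa)$ admits an expansion $\coeff_B(\xa) = \sum_{\supp(\b) < \ell} c_\b\, \coeff_B(\xb)$ with scalars $c_\b$, and applying $\abrace{\alpha, \cdot}$ to both sides preserves these scalars. Hence $\abrace{\alpha, A}(\x + \f(t)) = \abrace{\alpha, B}$ satisfies the same dependencies and is $\ell$-concentrated for every $\alpha \in \F^{w \times w}$.

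For the converse I argue by contrapositive. Suppose $B$ is not $\ell$-concentrated, so some matrix coefficient $M := \coeff_B(\x^{\a_0})$ lies outside $V := \Span\{\coeff_B(\xb) : \supp(\b) < \ell\}$; I will produce an $\alpha \in \F^{w\times w}$ for which $\abrace{\alpha, B}$ fails $\ell$-concentration. The key structural step is the expansion $\coeff_B(\x^\b) = \sum_{\a \geq \b} \binom{\a}{\b}\, c_\a\, \f(t)^{\a - \b}$, which shows that every coefficient of $B$ lies in $W \otimes_\F \F[t]$, where $W \subseteq \F^{w\times w}$ is the finite-dimensional $\F$-span of the coefficients of the original $A$. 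Decomposing $M = \sum_k M^{(k)} t^k$ and each $M_\b := \coeff_B(\xb) = \sum_k M_\b^{(k)} t^k$ with $M^{(k)}, M_\b^{(k)} \in W$, the condition $M \notin V$ becomes a finite-dimensional $\F$-linear non-containment of the ``stacked'' vector of $M$ against those of the $M_\b$ in $W^{d'+1}$, for a degree bound $d'$ on the relevant polynomials in $t$.

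A good $\alpha$ is one that does not collapse this non-containment: no scalars $c_\b \in \F$ should satisfy $\abrace{\alpha, M - \sum_\b c_\b M_\b} = 0$ as an element of $\F[t]$. Equivalently, $\alpha$ must avoid the subspace
\[
H_c := \bigl(\Span_\F\{(M - \textstyle\sum_\b c_\b M_\b)^{(k)} : k\}\bigr)^\perp \subseteq \F^{w\times w}
\]
for every choice $c = (c_\b)_\b \in \F^s$; each $H_c$ is a proper subspace because $M - \sum_\b c_\b M_\b$ is nonzero. The main obstacle is that $\bigcup_c H_c$ is an infinite union of proper subspaces, which over a general field could in principle exhaust $\F^{w\times w}$. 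I would address this by noting that the family $\{H_c\}_c$ depends polynomially on $c$, so its union is contained in an algebraic hypersurface of $\F^{w\times w}$; combined with the paper's standing assumption that $\F$ is sufficiently large, a Zariski/Vandermonde genericity argument produces an $\alpha$ outside every $H_c$. For such $\alpha$, the coefficient $\abrace{\alpha, M}$ of $\abrace{\alpha, B}$ fails to lie in $\Span_\F\{\abrace{\alpha, M_\b}\}$, so $\abrace{\alpha, B}$ is not $\ell$-concentrated, completing the contrapositive.
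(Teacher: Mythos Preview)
Your forward direction matches the paper's. For the converse, the paper's argument is far simpler than yours: rather than trying to find an $\alpha$ that preserves the non-containment $M\notin V$ of one particular coefficient, the paper directly chooses $\alpha$ in the annihilator of the \emph{entire} low-support coefficient space of $A'$, but not in the annihilator of $A'$ itself. For such an $\alpha$, every low-support coefficient of $\langle\alpha,A'\rangle$ vanishes while $\langle\alpha,A'\rangle\neq 0$, so $\langle\alpha,A'\rangle$ is immediately not $\ell$-concentrated. This is a one-line linear-algebra observation and sidesteps the infinite-union-of-subspaces obstacle altogether.

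Your route has two genuine gaps. First, as the paper's proof makes explicit, the span defining concentration after the shift is over $\F(t)$, not $\F$; hence the witnesses $c_\b$ to $M\in V$ may lie in $\F(t)$, and $M\notin V$ does \emph{not} reduce to the $\F$-linear ``stacked'' non-containment in $W^{d'+1}$ that you describe. Consequently, even if you produced an $\alpha$ with $\langle\alpha,M\rangle\notin\Span_\F\{\langle\alpha,M_\b\rangle\}$, this would not rule out $\langle\alpha,M\rangle\in\Span_{\F(t)}\{\langle\alpha,M_\b\rangle\}$, which is what failure of $\ell$-concentration of $\langle\alpha,B\rangle$ actually requires. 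Second, the step you yourself flag as ``the main obstacle'' is not established: you assert that $\bigcup_{c\in\F^s} H_c$ lies in an algebraic hypersurface because the family depends polynomially on $c$, but a polynomially parameterized family of proper linear subspaces of varying dimension need not have proper union, and the appeal to ``Zariski/Vandermonde genericity'' is too vague to close this.
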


\begin{proof}
Assume that $A'(\x) = A(\x + \f)$ is not $\ell$-concentrated.
Then there exists a monomial~$\x^{\b}$ such that
$ \coeff_{A'}(\xb) \notin
\Span_{\F(t)}\{\coeff_{A'}(\xa) \mid \supp(\a) < \ell\}$.
Hence, 
there exists an $\alpha  \in \F^{w \times w}$ such that 
$\abrace{\alpha, \coeff_{A'}(\xa)} = 0$, for all~$\a$ with $\supp(\a) < \ell$,
but $\abrace{\alpha, A'} \ne 0 $.
We thus found an $\alpha  \in \F^{w \times w}$  such that
$\abrace{\alpha, A} (\x + \f(t))$ is not $\ell$-concentrated.

For the other direction, let $A(\x + \f)$ be $\ell$-concentrated.
Hence, any coefficient~$\coeff_{A'}(\xa)$ can be written as a linear combination
of the small support coefficients,
\[
\coeff_{A'}(\xa) = \sum_{\substack{\b \\ \supp(\b) < \ell}} \gamma_{\b}\, \coeff_{A'}(\xb),
\]
for some $\gamma_{\b} \in \F$.
Hence,
for any $\alpha \in \F^{w \times w}$, we also have
\[
\abrace{\alpha,\coeff_{A'}(\xa)} 
= \abrace{\alpha,\sum_{\substack{\b \\ \supp(\b) < \ell}} \gamma_{\b}\, \coeff_{A'}(\xb)}.
\]
That is,
$\abrace{\alpha, A} (\x + \f(t))$ is $\ell$-concentrated.
\end{proof}


\section{Low Support Concentration in ROABPs}
\label{sec:conc}

Recall that a polynomial~$A(\x)$ over an $\F$-algebra~$\A$ is called low-support
concentrated if its low-support coefficients span all its coefficients. 
We show an efficient shift which achieves concentration in matrix polynomials computed by ROABPs. 
We use the quasi-polynomial size hitting-set for ROABPs
given by Agrawal~et~al.~\cite{AGKS14}.
Their hitting-set is based on a {\em basis isolating weight assignment\/}
which we define next.

Recall that $\M = \{0,1, \dots,d\}^n$ denotes the set of all exponents of monomials in~$\x$ 
of individual degree bounded by~$d$.
For a weight function $\w \colon [n] \to \N$ and $\a = (a_1,a_2, \dots, a_n) \in \M$, let
the weight of~$\a$ be $\w(\a) = \sum_{i=1}^n \w(i)a_i$.
Let $\A_k$ be a $k$-dimensional algebra over the field $\F$.

\begin{definition}
A weight function $\w \colon [n] \to \N$ is called a 
{\em basis isolating weight assignment for a polynomial\/} $A(\x) \in \A_k[\x]$,
if there exists $S \subseteq \M$ with $\abs{S} \leq k$ such that
\begin{itemize}
\item  
$\forall \, \a \not= \b \in S,~~\w(\a) \neq \w(\b)$ and
\item 
$ \forall\, \a \in \overline{S} := \M - S, ~~\coeff_A(\xa) \in \Span_\F \{ \coeff_A(\xb) \mid \b \in S \text{ and } \w(\b) < \w(\a) \}.$
\end{itemize}
\end{definition}

Agrawal~et~al.~\cite[Lemma 8]{AGKS14} presented a quasi-polynomial time construction of 
such a weight function for any polynomial~$A(\x) \in \F^{w \times w}[\x]$ computed by an ROABP.
The hitting-set is then defined by points
$(t^{\w(1)}, t^{\w(2)}, \dots, t^{\w(n)})$ for $\poly(n,d,w)^{\log n}$ many~$t$'s.
Our approach now is to use this weight function for a shift of~$A(\x)$ by $\left(t^{\w(i)}\right)_{i=1}^n$.
Let~$A'(\x)$ denote the shifted polynomial,
\[
 A'(\x) = A(\x + t^\w) = A(x_1 +t^{\w(1)}, x_2 + t^{\w(2)}, \dots, x_n+t^{\w(n)})\,.
\]
We will prove that~$A'$ has low support concentration.

The coefficients of~$A'$ are linear combinations
of coefficients of~$A$, which are given by the equation
\begin{equation}
\label{eq:shift}
\coeff_{A'}(\xa) = \sum_{\b \in \M} \binom{\b}{\a} t^{\w(\b-\a)} \cdot \coeff_{A}(\xb),
\end{equation}
where $\binom{\b}{\a} = \prod_{i=1}^n \binom{b_i}{a_i} $ for any $\a,\b \in \N^n$.

Equation~(\ref{eq:shift}) can be expressed in terms of matrices.
Let~$C$ be the coefficient matrix of~$A$,
i.e.\ the $\M \times [k]$ matrix with the coefficients~$\coeff_{A}(\xa)$ as rows,
\[
 \row{C}{\a} = \coeff_{A}(\xa)^T \,.
\]
Similarly, 
let~$C'$ be the $\M \times [k]$ with the coefficients~$\coeff_{A'}(\xa)$ as rows.
Let furthermore~$T$ be the $\M \times \M$ {\em transfer matrix\/} given by
\[
\entry{\T}{\a}{\b} = \binom{\b}{\a} \,,
\]
and~$D$ be the $\M \times \M$ diagonal matrix given by 
\[
\entry{D}{\a}{\a} = t^{\w(\a)} \,.
\]
The inverse of~$D$ is the diagonal matrix given by $\entry{D^{-1}}{\a}{\a} = t^{-\w(\a)}$.
Now equation~(\ref{eq:shift}) becomes
\begin{equation}
\label{eq:shiftMatrix}
C' = D^{-1} T D C \,.
\end{equation}
As shifting is an invertible operation, the matrix~$\T$ is also invertible
and $\rank(C') = \rank(C)$.

\begin{lemma}[Isolation to concentration]
Let $A(\x)$ be a polynomial over a $k$-dimensional algebra~$\A_k$.
Let~$\w$ be a basis isolating weight assignment for~$A(\x)$.
Then $A(\x + t^\w)$ is $\ell$-concentrated, where $\ell = \ceil{\log(k+1)}$.
\label{lem:lconc}
\end{lemma}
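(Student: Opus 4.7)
The plan is to reduce the statement to a purely combinatorial claim about binomial-coefficient matrices, and then prove that claim by a divide-and-conquer induction. Starting from equation~(\ref{eq:shift}), I would substitute the basis decomposition guaranteed by the isolating property. For each $\b' \in \overline{S}$, write
\[
\coeff_A(\x^{\b'}) \;=\; \sum_{\b \in S,\ \w(\b) < \w(\b')} \lambda_{\b', \b}\, \coeff_A(\xb),
\]
and collect terms by basis element to obtain
\[
\coeff_{A'}(\xa) \;=\; \sum_{\b \in S} t^{\w(\b) - \w(\a)}\, u_{\a, \b}(t)\, \coeff_A(\xb), \qquad u_{\a, \b}(t) \;=\; \binom{\b}{\a} + \sum_{\substack{\b' \in \overline{S}\\ \w(\b') > \w(\b)}} \binom{\b'}{\a}\, \lambda_{\b', \b}\, t^{\w(\b') - \w(\b)}.
\]
The crucial consequence of the isolating property is that every extra summand in $u_{\a, \b}(t)$ carries a \emph{strictly positive} power of $t$, so that $u_{\a, \b}(0) = \binom{\b}{\a}$.

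Since shifting is invertible, the coefficient space of $A'$ over $\F(t)$ has the same dimension $m := \dim_\F \{\coeff_A(\xb) : \b \in \M\} \leq |S| \leq k$ as that of $A$. Hence $\ell$-concentration of $A'$ is equivalent to the rows indexed by $\{\a : \supp(\a) < \ell\}$ spanning this $m$-dimensional space over $\F(t)$. Factoring the diagonal $t$-powers out of both row and column indices (which preserves rank over $\F(t)$), this is equivalent to the $|\M_\ell| \times m$ matrix $U(t) = (u_{\a, \b}(t))_{\a \in \M_\ell,\, \b \in S}$ having rank $m$, where $\M_\ell := \{\a \in \M : \supp(\a) < \ell\}$. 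Evaluating at $t = 0$, $U(0) = (\binom{\b}{\a})_{\a, \b}$ is an \emph{integer} matrix, so it suffices to exhibit $\a_1, \ldots, \a_m \in \M_\ell$ with $\det(\binom{\b_j}{\a_i})_{i,j \in [m]} \neq 0$ (writing $S = \{\b_1, \ldots, \b_m\}$).

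The main obstacle is thus the following combinatorial claim: for any $m$ distinct $\b_1, \ldots, \b_m \in \M$ there exist $\a_1, \ldots, \a_m \in \M$ with $\supp(\a_i) \leq \lceil \log(m+1)\rceil - 1$ such that $\det(\binom{\b_j}{\a_i}) \neq 0$. I would prove this by induction on $m$. Since the $\b_j$'s are distinct, some coordinate $l \in [n]$ takes at least two distinct values among them; pick a threshold $v$ so that both $S_L := \{j : (\b_j)_l \leq v\}$ and $S_H := \{j : (\b_j)_l > v\}$ have size at most $\lceil m/2 \rceil$. Apply the induction separately to each half to obtain small-support $\a$-vectors; for the $S_H$-vectors additionally raise the $l$-th coordinate to $v+1$. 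Because $\binom{(\b_j)_l}{v+1} = 0$ for every $j \in S_L$, the combined $m \times m$ matrix becomes block lower-triangular, and its diagonal blocks are nonsingular by induction (after factoring out the nonzero scalar $\binom{(\b_j)_l}{v+1}$ from each row of the $S_H$-block). Each recursion adds exactly $1$ to the support, so recursion depth $\lceil \log m \rceil$ yields the claimed bound $\leq \ell - 1$ using $m \leq k$ and $\ell = \lceil \log(k+1) \rceil$. The delicate part is matching this bound exactly: one must guarantee a sufficiently balanced split at every level and handle the small-$m$ edge cases carefully, so that no slack creeps into the support count.
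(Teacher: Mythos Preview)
Your reduction to the combinatorial statement is correct and matches the paper: the paper's lowest-degree-term computation for $\det(R')$ is exactly your evaluation of $U(t)$ at $t=0$, and both land on showing that any $|S|\le k$ columns of the binomial matrix $\Tl$ are linearly independent. The gap is in your proof of that combinatorial claim.

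Your divide-and-conquer on~$m$ has two genuine problems. First, the step ``raise the $l$-th coordinate to $v+1$'' only behaves as you describe when every inductive $\a^H_i$ already has $(\a^H_i)_l=0$; otherwise overwriting coordinate~$l$ can make two $\a$-vectors coincide, and the scalar $\binom{(\b_j)_l}{v+1}$ no longer factors out to leave the inductive matrix behind. You cannot in general avoid coordinate~$l$ inside the recursion on $S_H$: if the members of $S_H$ differ only in coordinate~$l$, the inner recursion must use it. Second, the balanced split you posit need not exist. For $m=3$ with $\b_1=(1,1,0),\ \b_2=(1,0,1),\ \b_3=(0,1,1)$, every coordinate forces $|S_H|=2$; your construction then outputs an $\a$ of support~$2$, whereas the target is $\lceil\log 4\rceil-1=1$. (For $m=4$ with $\{\mathbf 0,\e_1,\e_2,\e_3\}$, no coordinate even admits sizes $\le\lceil m/2\rceil$ on both sides.) The paper sidesteps both issues by inducting on the number of variables~$n$ rather than on~$m$: it rephrases the claim as ``every nonzero polynomial of sparsity $\le 2^\ell-1$, shifted by $\mathbf 1$, has a nonzero coefficient of support $<\ell$'', expands in the last variable as $V=\sum_i U_i\,x_n^i$, and uses pigeonhole --- if at least two $U_i$ are nonzero then some nonzero $U_i$ has sparsity $\le 2^{\ell-1}-1$ --- to drop $\ell$ by one. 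This delivers the tight bound with no balancing hypothesis.
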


\begin{proof}
Let $A'(\x) = A(\x + t^\w)$. 
We reconsider equation~(\ref{eq:shiftMatrix}) with respect to the low support monomials:
let $\Ml = \{\a \in \M \mid \supp(\a) < \ell \}$ be the exponents of low support.
Then we define matrices

\begin{center}
\begin{tabular}{rp{0.8\textwidth}}
$C'_\ell$ :& the $\Ml \times [k]$ submatrix of~$C'$ that
contains the  coefficients of~$A'$ of support~$<\ell$,\\
$\T_{\ell}$ :& the $\Ml \times \M$ submatrix of~$\T$ 
restricted to the rows $ \a \in \Ml$,\\
$D_{\ell}$ :& the $\Ml \times \Ml$ submatrix of~$D$ restricted
to the rows and columns from~$\Ml$.
\end{tabular}
\end{center}
To show that~$A'$ is $\ell$-concentrated, we need to prove that 
$\rank(C'_{\ell}) = \rank(C)$.
By equation~(\ref{eq:shiftMatrix}), matrix~$C'_\ell$ can be written as 
$C'_\ell = D_{\ell}^{-1} \T_{\ell} D C.$
Since~$D_{\ell}$ and~$D_{\ell}^{-1}$ are diagonal matrices, they have full rank.
Hence, it suffices to show that  $\rank(\T_\ell D C) = \rank(C)$.

W.l.o.g.\ we assume
that the order of the rows and columns in all the above matrices
that are indexed by~$M$ or~$\Ml$ is according to increasing weight~$\w(\a)$
of the indices~$\a$.
The rows with the same weight can be arranged in an arbitrary order.

Now, recall that~$\w$ is a basis isolating weight assignment.
Hence, there exists a set $S \subseteq \M$ 
such that the coefficients~$\coeff_A(\b)$, for $\b \in S$, span all 
coefficients~$\coeff_A(\a)$, for $\a \in \M$.
In terms of the coefficient matrix~$C$, for any $\a \in  \M$ we can write
\begin{equation} 
\row{C}{\a} \in \Span \{ \row{C}{\b} \mid \b \in S \text{ and } \w(\b) < \w(\a) \}.
\label{eq:lowWeightSpan}
\end{equation}

Let $S = \{\s_1, \s_2, \dots, \s_{k'} \}$ for some $k' \leq k$.
Let~$C_0$ be the $k' \times k$ submatrix of~$C$ 
whose $i$-{th} row is $\row{C}{\s_i}$,
i.e.\ $\row{C_0}{i} = \row{C}{\s_i}$.
By~(\ref{eq:lowWeightSpan}),
for every $\a \in \M$,
there is a vector $\bfgamma_{\a} = (\gamma_{\a,1}, \gamma_{\a,2}, \dots, \gamma_{\a,k'}) \in \F^{k'}$
such that
$\row{C}{\a} = \sum_{j=1}^{k'} \gamma_{\a,j}\, \row{C_0}{j}$.
Let~$\Gamma = \left(\gamma_{\a,j}\right)_{\a,j}$ be the $\M \times [k']$ matrix with these vectors as rows.
Then we get
\[
C = \Gamma C_0\,.
\]

Observe that the $\s_i$-th row of~$\Gamma$ is simply~$\e_i$, the $i$-th standard unit vector.
By~(\ref{eq:lowWeightSpan}),
the coefficient~$\row{C}{\s_i}$ is used to express~$\row{C}{\a}$ only when $\w(\a) > \w(\s_i)$.
Recall that the rows of the matrices indexed by~$M$, like~$\Gamma$,  
are in order the of increasing weight of the index.
Therefore,
when we consider the $i$-th column of~$\Gamma$ from the top,
the entries are all zero down to row~$s_i$, where we hit on the one from~$\e_i$,
\begin{equation}\label{eq:zero}
 \Gamma(\s_i,i) = 1  ~\text{ and }~ \forall \, \a \not= \s_i,~ \w(\a) \leq \w(\s_i) \implies \Gamma(\a,i) = 0 \,.
\end{equation}

Recall that our goal is to show $\rank(\Tl D C) = \rank(C)$.
For this,
it suffices to show that the $\Ml \times k'$ matrix $R = \Tl D\Gamma$ has full 
column rank~$k'$,
because then we have
$\rank(\Tl D C) = \rank(\Tl D \Gamma C_0) = \rank(R C_0) = \rank(C_0) = \rank(C)$.

To show that $R$ has full column rank~$k'$,
observe that the $j$-th column of~$R$ can be written as
\begin{equation}\label{eq:colR}
\column{R}{j} = \sum_{\a \in \M} \column{\Tl}{\a}\, \entry{\Gamma}{\a}{j}\, t^{\w(\a)} \,.
\end{equation}
By~(\ref{eq:zero}),
the term with the lowest degree in equation~(\ref{eq:colR}) is~$t^{\w(\s_j)}$.
By $\lc(\column{R}{j})$ we denote the coefficient of the lowest 
degree term in the polynomial $\column{R}{j}$.
Because $\entry{\Gamma}{\s_j}{j}=1$, we have
\[
 \lc(\column{R}{j}) = \column{\Tl}{\s_j} \,.
\]
We define the $\Ml \times [k']$ matrix~$R_0$ whose
$j$-th column is  $\lc(\column{R}{j})$,
i.e.\ $\column{R_0}{j}) = \column{\Tl}{\s_j}$.
We will show in Lemma~\ref{lem:transferMatrix} below that 
the columns of matrix~$\Tl$ indexed by the set~$S$ are linearly independent.
Therefore the~$k'$ columns of~$R_0$ are linearly independent. 

Hence, there are~$k'$ rows in~$R_0$ such that
its restriction to these rows, say~$R_0'$, is a square matrix with nonzero determinant. 
Let~$R'$ denote the restriction of~$R$ to the same set of rows. 
Now observe that the lowest degree term in~$\det(R')$ 
has coefficient precisely~$\det(R'_0)$,
i.e., $\lc(\det(R')) = \det(R'_0)$. 
This is because the lowest degree term in~$\det(R')$ has 
degree $\sum_{j=1}^{k'} \w(\s_j)$,
and this degree can only be obtained 
when the degree $\w(\s_j)$ term is taken from the $j$-th column, for all $j$.
We conclude that $\det(R') \neq 0$ and hence~$R$ has full column rank.
\end{proof}

It remains to show that  
the $k' \leq k$ columns of matrix~$\Tl$ 
indexed by the set~$S$ are linearly independent.
In fact, we will show that any $k = 2^\ell -1$ columns of~$\Tl$
are independent.

\begin{lemma}
\label{lem:transferMatrix}
Let $\Tl$ be the $\Ml \times \M$ matrix with $\entry{\Tl}{\a}{\b} = \binom{\b}{\a}$.
Any $2^\ell -1$ columns of matrix~$\Tl$ are linearly independent. 
\end{lemma}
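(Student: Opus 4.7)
The plan is to translate the linear-independence question for the columns of $\Tl$ into a combinatorial statement about vanishing marginals of a function on the grid $\{0,1,\dots,d\}^n$, and then prove that statement by a slicing induction on the number of variables~$n$.

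Suppose for contradiction that some set of $k = 2^\ell - 1$ distinct indices $\b_1,\dots,\b_k \in \M$ yields linearly dependent columns of $\Tl$, with witnesses $c_1,\dots,c_k \in \F$ not all zero, so $\sum_i c_i \binom{\b_i}{\a} = 0$ for every $\a \in \Ml$. Fix $S \subseteq [n]$ with $|S| < \ell$ and consider the generating polynomial
\[
F_S(u) \;=\; \sum_{i=1}^k c_i \prod_{j \in S}(1+u_j)^{b_{i,j}}
\]
in the variables $(u_j)_{j \in S}$, where $b_{i,j}$ denotes the $j$-th coordinate of $\b_i$. The coefficient of $\prod_{j \in S} u_j^{a'_j}$ in $F_S$ equals $\sum_i c_i \prod_{j \in S}\binom{b_{i,j}}{a'_j} = \sum_i c_i \binom{\b_i}{\a}$, where $\a$ agrees with $(a'_j)_{j \in S}$ on~$S$ and is zero outside (so $\supp(\a) \subseteq S$ gives $\a \in \Ml$). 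Hence every coefficient of $F_S$ vanishes, so $F_S \equiv 0$. Substituting $t_j = 1 + u_j$ turns this into $\sum_i c_i \prod_{j \in S} t_j^{b_{i,j}} = 0$, and grouping by the distinct values $\beta \in \{0,\dots,d\}^S$ of the restriction $\b_i|_S := (b_{i,j})_{j \in S}$ produces the \emph{marginal condition}: for every $S$ with $|S|<\ell$ and every $\beta$, $\sum_{i\,:\,\b_i|_S = \beta} c_i = 0$.

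Now define $f\colon\{0,\dots,d\}^n \to \F$ by $f(\b_i) = c_i$ for $i \in [k]$ and $f(\b) = 0$ elsewhere. Then $f \ne 0$, $|\supp(f)| \le k = 2^\ell - 1$, and $f$ has vanishing marginals on every sub-box $\{\b : \b|_S = \beta\}$ of codimension $|S| < \ell$. The goal therefore reduces to the combinatorial claim: \emph{any nonzero $f\colon\{0,\dots,d\}^n \to \F$ whose marginals vanish on all sub-boxes of codimension $<\ell$ satisfies $|\supp(f)| \ge 2^\ell$.} Applied to our $f$ this yields $2^\ell \le 2^\ell - 1$, a contradiction.

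I would prove the claim by induction on~$n$, slicing by the first coordinate: $f_v(b_2,\dots,b_n) := f(v, b_2,\dots,b_n)$ for $v \in \{0,\dots,d\}$. If at least two slices $f_{v_1}, f_{v_2}$ are nonzero, then for each such $f_v$ and any $S' \subseteq \{2,\dots,n\}$ of size $\le \ell - 2$, the sub-box $\{b_1 = v,\ \b|_{S'} = \beta'\}$ has codimension $\le \ell - 1 < \ell$ in $n$ variables, so its $f$-marginal is zero; hence $f_v$ has vanishing marginals on all codimension-$(<\ell-1)$ sub-boxes of $\{0,\dots,d\}^{n-1}$, and induction (for $n-1$, parameter $\ell - 1$) gives $|\supp(f_v)| \ge 2^{\ell-1}$, whence $|\supp(f)| \ge 2 \cdot 2^{\ell-1} = 2^\ell$. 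If exactly one slice $f_{v^*}$ is nonzero, then every codimension-$(<\ell)$ sub-box of $\{0,\dots,d\}^{n-1}$ lifts (without fixing $b_1$) to a codimension-$(<\ell)$ sub-box of $\{0,\dots,d\}^n$ on which $f$'s marginal collapses to that of $f_{v^*}$; thus $f_{v^*}$ inherits the vanishing-marginal property and the inductive hypothesis (for $n-1$, same $\ell$) yields $|\supp(f_{v^*})| \ge 2^\ell$. The main subtlety is precisely that the two cases must recurse differently---on $\ell - 1$ versus on the same~$\ell$---so the induction has to be carried out on $n$ rather than on $\ell$ alone.
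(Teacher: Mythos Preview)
Your proof is correct. Both the paper and you reduce the linear-independence question to a combinatorial statement proved by induction on~$n$, with a case split on the number of nonzero slices in one coordinate, recursing with parameter $\ell-1$ when there are at least two nonzero slices and with the same~$\ell$ when there is exactly one. The paper phrases the reduction as ``a nonzero polynomial of sparsity at most $2^\ell-1$, shifted by~$\mathbf 1$, retains a nonzero coefficient of support~$<\ell$,'' whereas you phrase it dually as ``a nonzero function with vanishing codimension-$(<\ell)$ marginals has support~$\ge 2^\ell$''; these are contrapositives once one observes (as you do via $t=1+u$) that the column-dependence hypothesis is exactly the vanishing-marginal condition. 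Your formulation makes the two-slice case slightly cleaner: since every slice automatically inherits the marginal condition with~$\ell$ decreased by one, you can apply the inductive hypothesis to each nonzero slice and add, whereas the paper must first pigeonhole to locate a slice of sparsity at most $2^{\ell-1}-1$ and then use a ``largest~$j$'' argument to exhibit an explicit nonzero monomial. One minor omission: you should state the base case $n=0$ (vacuous for $\ell\ge 1$, since the $S=\emptyset$ marginal forces $f=0$; trivial for $\ell=0$).
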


\begin{proof}
Let $S \subseteq \M$ now be any set of size $k = 2^\ell -1$.
Let $\Tls$ be the $\Ml \times S$ submatrix of~$\Tl$ that consists of the columns indexed by~$S$. 
To prove the lemma we will show that
for any $0 \not= \v \in \F^{k}$ we have $\Tls \v \neq 0$. 

Let $\v = \left(v_{\a}\right)_{\a \in S}$.
Define the polynomial 
$V(\x) = \sum_{\a \in S} v_{\a} \xa~ \in \F[\x]$.
Let $V'(\x)$ be the polynomial where every variable in~$V(\x)$ is shifted by one:
$V'(\x) = V(\x + \mathbf{1})$.
From equation~(\ref{eq:shift}) we get that for any $\a \in \Ml$, 

\[
 \coeff_{V'}(\x^{\a}) = \sum_{\b \in S} \binom{\b}{\a} v_{\b} =   \row{\Tls}{\a}\, \v \,.
\]
Hence,
$\Tls \v$  gives all the coefficients of~$V'(\x)$ of support~$<\ell$.
Now it remains to show that at least one of these coefficients is nonzero. 
We show this in our next claim about {\em concentration in sparse polynomials},
which is also of independent interest.

\begin{claim}
Let $V(\x) \in \F[\x]$ be a non-zero $n$-variate 
polynomial with sparsity bounded by $2^{\ell}-1$.
Then $V'(\x) = V(\x + \mathbf{1})$ has a nonzero 
 coefficient of support~$<\ell$.
\label{cla:concSparse}
\end{claim}

We prove the claim by induction on the number of variables~$n$.
For $n = 1$,
polynomial~$V(\x)$ is univariate, i.e.\ all monomials in~$V(\x)$ have support~1.
Hence, for $\ell > 1$ it suffices to show that $V'(\x) \not= 0$.
But this is equivalent to $V(\x)\not= 0$, which holds by assumption.
If $\ell=1$,
then~$V(\x)$ is a univariate polynomial with exactly one monomial,
and therefore $V(\x+\mathbf{1})$ has a nonzero constant part. 

Now assume that the claim is true for~$n-1$ and let~$V(\x)$ have $n$ variables.
Let $\x_{n-1}$ denote the set of first $n-1$ variables. 
Let us write $V(\x) = \sum_{i=0}^d U_i \, x_n^i$, where $U_i \in \F[\x_{n-1}]$, 
for every $0 \leq i\leq d$.
Let $U'_i(\x_{n-1}) = U_i(\x_{n-1} + \mathbf{1})$ be the shifted polynomial,
for every $0 \leq i \leq d$.
We consider two cases: 

\textbf{Case 1: } 
There is exactly one index  $i \in [0, d]$ 
for which $U_i \neq 0$. Then~$U_i$ has sparsity $\le 2^\ell-1$.
Because~$U_i$ is an $(n-1)$-variate polynomial,
$U'_i$ has a nonzero coefficient of support~$<\ell$
by inductive hypothesis.

Thus, $V'(\x) = (x_n+1)^i \, U'_i$ also has a nonzero coefficient of support~$<\ell$.

\textbf{Case 2: } 
There are at least two $U_i$'s which are nonzero. 
Then there is at least one index in $i \in [0, d]$
such that~$U_i$ has sparsity $2^{\ell-1}-1$. 
And hence, by the inductive hypothesis, 
$U'_i$ has a nonzero coefficient of support~$< \ell -1$.
Consider the largest index~$j$ such that~$U'_j$ 
has a nonzero coefficient of support~$< \ell -1$.
Let the corresponding monomial be~$\x_{n-1}^\a$.
Now, as $V'(\x) = \sum_{i=0}^d U'_i \, (x_n +1)^i $, we have that 
$$
\coeff_{V'}(\x_{n-1}^{\a} x_n^j) 
= \sum_{r= j}^d \binom{r}{j} \coeff_{U'_r}(\x_{n-1}^\a).
$$
By our choice of~$j$
we have $\coeff_{U'_j}(\x_{n-1}^\a) \neq 0$ 
and $\coeff_{U'_r}(\x_{n-1}^\a) = 0$, for $r > j$.
Hence, $\coeff_{V'}(\x_{n-1}^{\a} x_n^j) \neq 0$. 
The monomial $\x_{n-1}^{\a} x_n^j$ has support~$<\ell$, which 
proves our claim and the lemma.
\end{proof}

We can use Lemma~\ref{lem:lconc} to get concentration in a 
polynomial computed by an ROABP. 
Agrawal et al.\ \cite[Lemma 8]{AGKS14} constructed a family 
$\mathcal{F} = \{\f_1(t), \f_2(t), \dots, \f_N(t)\}$ of $n$-tuples 
such that for any given polynomial $A(\x) \in \F^{w \times w}[\x]$ computed by an 
ROABP of width~$w$,
at least one of them is a basis isolating weight assignment 
and hence, provides $\log (w^2+1)$-concentration,
where 
$N = (ndw)^{O(\log n)}$.
The  degrees are bounded by
$D = \max1 \{\deg(f_{i,j}) \mid i \in [N] \text{ and } j \in [n]\} 
= (ndw)^{O(\log n)}$.
The family~$\mathcal{F}$ can be generated in time $(ndw)^{O(\log n)}$.

By Lemma~\ref{lem:lconc}, we now have an alternative PIT for {\em one\/} ROABP
because we could simply try all $\f_i \in \mathcal{F}$ for low support concentration,
and we know that at least one will work.
However,
in Lemmas~\ref{lem:blackBoxSumOfTwo} and~\ref{lem:blackBoxSumOfC}
we apply the shift to several ROABPs simultaneously,
and we have no guarantee that one of the shifts works for all of them.
We solve this problem by  combining the $n$-tuples in~$\mathcal{F}$ 
into one single shift that works for every ROABP.

Let $\L(y, t) \in \F[y, t]^n$ be the Lagrange interpolation of~$\mathcal{F}$.
That is, for all $j \in [n]$,
\[
L_j  = \sum_{i \in [N]} f_{i, j}(t) \prod_{\substack{{i' \in [N]}\\ {i' \ne i}}} \frac{y-\alpha_{i'}}{\alpha_{i}-\alpha_{i'}} \, ,\]
where $\alpha_i$ is an arbitrary unique field element associated with~$i$,
for all $i \in [N]$.
(Recall that we assume that the field~$\F$ is large enough
that these elements exist.)
Note that $L_j |_{y = \alpha_i} = f_{i,j}$.
Thus, $\L |_{y = \alpha_i} = \f_i$.
Also, $\deg_y(L_j) = N-1$ and $\deg_t(L_j) \leq D$.

\begin{lemma}
Let $A(\x)$ be a $n$-variate polynomial over a $k$-dimensional $\F$-algebra~$\A_k$
and~$\mathcal{F}$ be a family of $n$-tuples,
such that there exists
an ${\f} \in \mathcal{F}$ such that
$A'(\x, t) = A(\x + {\f}) \in \A_k(t)[\x]$  is $\ell$-concentrated.
Then, $A''(\x, y, t) = A(\x + \L) \in \A_k(y,t)[\x]$ is $\ell$-concentrated.
\label{lem:interpolation}
\end{lemma}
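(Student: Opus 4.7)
My plan is to reduce $\ell$-concentration of $A''$ over $\F(y,t)$ to the assumed $\ell$-concentration of $A'$ over $\F(t)$ by a rank-specialization argument. I will work with the coefficient matrices $C$, $C'$, $C''$ of $A(\x)$, $A'(\x,t)$, $A''(\x,y,t)$ (rows indexed by exponents $\a \in \M$, columns by an $\F$-basis of $\A_k$), together with their small-support restrictions $C_\ell$, $C'_\ell$, $C''_\ell$. As in the proof of Lemma~\ref{lem:lconc}, the goal amounts to proving the rank equality $\rank_{\F(y,t)} C''_\ell = \rank_{\F(y,t)} C''$.

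First I would record two rank equalities that come for free from the invertibility of shifting (exactly as in equation~(\ref{eq:shiftMatrix})): $\rank_{\F(y,t)} C'' = \rank_\F C$ and $\rank_{\F(t)} C' = \rank_\F C$. Plugging these into the hypothesis that $A'$ is $\ell$-concentrated immediately yields $\rank_{\F(t)} C'_\ell = \rank_\F C$. The easy half of what I want then follows from the submatrix bound: $\rank_{\F(y,t)} C''_\ell \leq \rank_{\F(y,t)} C'' = \rank_\F C$.

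For the matching lower bound I would invoke the standard fact that specializing a transcendental can only lower the rank of a matrix of rational functions: if some $r \times r$ minor of $C''_\ell(y,t)$ is nonzero after substituting $y = \alpha_i$, then it was already a nonzero element of $\F(y,t)$. Since $\L(\alpha_i, t) = \f_i$, the specialization $C''_\ell\bigl|_{y=\alpha_i}$ coincides with $C'_\ell$, whose rank over $\F(t)$ equals $\rank_\F C$ by the previous paragraph. Therefore $\rank_{\F(y,t)} C''_\ell \geq \rank_\F C$, and together with the upper bound this gives the desired equality.

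The only delicate point is cleanly matching Definition~\ref{def:concentration} (stated over $\F$) to the rank equality over the function field $\F(y,t)$, since after the Lagrange-interpolated shift the coefficients live in $\A_k \otimes_\F \F(y,t)$. I expect this to be entirely routine, because the transfer matrices used in the shift are unimodular over the appropriate polynomial ring, so the dimension of the coefficient span is preserved when we extend scalars. No real obstacle should arise beyond this bookkeeping; the substance of the proof is the one-line specialization inequality combined with shift-invariance of rank.
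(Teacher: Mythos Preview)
Your proposal is correct and follows essentially the same route as the paper. The paper likewise observes that the specialization $y=\alpha_i$ sends $C''_\ell$ to $C'_\ell$, picks a nonvanishing $k'\times k'$ minor of $C'_\ell$ (where $k'=\rank_\F C$), and lifts its nonvanishing to $C''_\ell$; your abstract ``specialization can only lower rank'' is exactly the same argument, with the upper bound and the shift-invariance of the full coefficient rank made slightly more explicit than in the paper.
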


\begin{proof}
Let $\rank_\F\{\coeff_A(\x^\a) \mid \a \in \M\} = k'$,
for some $k' \le k$,
and $M_{\ell} = \{ \a \in \M \mid \supp(\a) < \ell \}$.
We need to show that 
$\rank_{\F(y, t)}\left\{
\coeff_{A''}(\x^\a) \mid \a \in M_{\ell}
\right\} = k'$.

Since $A'(\x)$ is $\ell$-concentrated, 
we have  that $\rank_{\F(t)}\left\{
\coeff_{A'}(\x^\a) \mid \a \in M_{\ell}
\right\} = k'$.
Recall that~$A'(\x)$ is an evaluation of~$A''$ at $y = \alpha_i$,
i.e.\ $A' (\x, t) = A''(\x, \alpha_i, t)$.
Thus, for all $\a \in \M$ we have 
$\coeff_{A'}(\xa) = \coeff_{A''}(\xa)|_{y=\alpha_i}$.

Let $C \in \F[t]^{k \times \abs{M_{\ell}}}$ be the matrix 
whose columns are~$\coeff_{A'}(\xa)$, for $\a \in M_{\ell}$.
Let similarly
$C' \in \F[y, t]^{k \times \abs{M_{\ell}}}$ be the  matrix 
whose columns are~$\coeff_{A''}(\xa)$, for $\a \in M_{\ell}$.
Then we have $C = C' |_{y = \alpha_i}$.

As $\rank_{\F(t)}(C) = k'$, there are~$k'$ rows in~$C$, say indexed by~$R$,  
such that $\det (\row{C}{R}) \ne 0$.
Because $\det (\row{C}{R})  = \det(\row{C'}{R})|_{y = \alpha_i}$,
it follows that $\det(\row{C'}{R}) \ne 0$. 
Hence, 
we have
$\rank_{\F(y, t)}(C') = k'$.
\end{proof}

Using the Lagrange interpolation, we can construct a single shift,
which works for all  ROABPs of width~$\leq w$.

\begin{theorem}
\label{thm:kronecker}
Given $n,d,w$, 
in time $(ndw)^{O(\log n)}$ one can compute a polynomial
$\f(t) \in \F[t]^n$ of degree $(ndw)^{O(\log n)}$
such that for any $n$-variate polynomial~$A(\x) \in \F^{w \times w}[\x]$
(or $\F^{1 \times w}[\x]$, or $\F[\x]$) of individual degree~$d$
that can be computed by an ROABP of width~$w$,
the polynomial~$A(\x+\f(t))$ is $\log(w^2+1)$-concentrated.
\end{theorem}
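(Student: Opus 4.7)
The plan is to stitch together three tools already at hand. From Agrawal et al.~\cite[Lemma 8]{AGKS14} I would take the family $\mathcal{F}=\{\f_1(t),\dots,\f_N(t)\}$ of $n$-tuples in $\F[t]^n$, with $N = (ndw)^{O(\log n)}$ members, each of degree at most $D = (ndw)^{O(\log n)}$, such that for every width-$w$ ROABP over $\F^{w\times w}$ at least one $\f_i$ realises a basis isolating weight assignment for the computed matrix polynomial. Since $\F^{w\times w}$ is a $w^2$-dimensional $\F$-algebra, Lemma~\ref{lem:lconc} guarantees that this particular $\f_i$ makes $A(\x+\f_i)$ be $\lceil\log(w^2+1)\rceil$-concentrated over $\F(t)$.

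Next, to replace the family by a single shift that serves every ROABP at once, I would form the Lagrange interpolation $\L(y,t)\in\F[y,t]^n$ of $\mathcal{F}$, exactly as constructed in the paragraph preceding Lemma~\ref{lem:interpolation}, satisfying $\L|_{y=\alpha_i}=\f_i$. Since for each $A$ some specialisation already yields concentration over $\F(t)$, Lemma~\ref{lem:interpolation} upgrades this to $\lceil\log(w^2+1)\rceil$-concentration of $A(\x+\L)$ over the larger field $\F(y,t)$, and this now holds \emph{uniformly} across all ROABPs of width $\le w$.

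The remaining, and I expect most delicate, step is to collapse the bivariate shift $\L(y,t)$ into a univariate shift $\f(t)\in\F[t]^n$ as the theorem demands. I would set $\f(t):=\L(t^M,t)$ for a Kronecker-style exponent $M$ chosen strictly larger than the $t$-degree of the witnessing $k'\times k'$ minor of the low-support coefficient submatrix of $A(\x+\L)$ whose non-vanishing certifies $\F(y,t)$-concentration. A degree count using $\deg_y L_j\le N-1$, $\deg_t L_j\le D$ and individual degree $d$ bounds this minor's total $(y,t)$-degree by $(ndw)^{O(\log n)}$, so an $M$ of this order suffices; the substitution $y\mapsto t^M$ is then injective on the monomial support of the minor, preserving it as a nonzero univariate polynomial in $t$, and therefore preserving $\lceil\log(w^2+1)\rceil$-concentration over $\F(t)$. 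The vector and scalar versions ($\F^{1\times w}[\x]$ and $\F[\x]$) are immediate by viewing such an $A$ as a matrix polynomial with the remaining entries zeroed out, which leaves the relevant dimension bounded by $w^2$.

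Finally, $\f(t)=\L(t^M,t)$ has degree at most $M(N-1)+D=(ndw)^{O(\log n)}$, and computing $\mathcal{F}$, the interpolation, and the Kronecker substitution all run in time $(ndw)^{O(\log n)}$. The main obstacle I foresee is establishing the explicit $(y,t)$-degree bound on the witnessing minor so that $M$ stays within the claimed budget; this is a clean but slightly careful degree-counting exercise on the expansion of $A(\x+\L)$ in the monomial basis.
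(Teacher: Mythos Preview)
Your proposal is correct and follows essentially the same route as the paper: take the AGKS family, merge it via the Lagrange interpolation $\L(y,t)$ and Lemma~\ref{lem:interpolation}, then collapse to a univariate shift by the Kronecker substitution $y\mapsto t^{M}$ with $M$ chosen to exceed the $t$-degree of the witnessing $k'\times k'$ minor (the paper takes $M=w^2\cdot dn(ndw)^{O(\log n)}+1$). The only cosmetic divergence is in the $\F^{1\times w}[\x]$ and $\F[\x]$ cases, where you zero-pad into $\F^{w\times w}[\x]$ while the paper instead writes $A=SB$ (resp.\ $A=SBT$) with $B\in\F^{w\times w}[\x]$ width-$w$ ROABP and transfers concentration across the linear maps $S,T$; both reductions are immediate.
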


\begin{proof}
Recall that for any polynomial $A(\x) \in \F^{w \times w}[\x]$ computed by an ROABP,
at least one tuple in the family 
$\{\f_1,\f_2,\dots, \f_N\}$ obtained from~\cite[Lemma 8]{AGKS14},
gives $\log(w^2+1)$-concentration.
By Lemma~\ref{lem:interpolation}, the Lagrange interpolation~$\L(y,t)$ 
of $\{\f_1,\f_2,\dots, \f_N\}$
has~$y$- and $t$-degrees  $(ndw)^{O(\log n)}$.
After shifting an $n$-variate polynomial of individual degree~$d$ by~$\L(y,t)$,
its coefficients will be polynomials in~$y$ and~$t$, 
with degree $d'=  dn (ndw)^{O\left(\log n\right)}$.
Consider the determinant polynomial $\det(\row{C'}{R})$ from Lemma~\ref{lem:interpolation}.
As the set of coefficients of polynomial $A(\x)$ have rank bounded by $w^2$,
$\det(\row{C'}{R})$ has degree bounded by $d'' = w^2 d'$.

Note that when we replace~$y$ by~$t^{d'' +1}$,
this will not affect the non-zeroness of the determinant,
and hence, the concentration is preserved.
Thus,
$\f = \L(t^{d'' +1},t)$ is an $n$-tuple of univariate polynomials in~$t$
that fulfills the claim of the theorem.

Now, consider the case when the ROABP computes a polynomial $A(\x) \in \F^{1\times w}[x]$.
It is easy to see that there exist $S \in \F^{1\times w}$
and $B \in \F^{w\times w}[\x]$ computed by a width-$w$ ROABP such that
$A = SB$. 
We know that $B(\x +\f(t))$ has $\log(w^2+1)$-concentration. 
As multiplying by $S$ is a linear operation, one can argue 
as in the proof of Lemma~\ref{lem:vectorConc} that any linear dependence among 
coefficients of $B(\x + \f(t))$ also holds among coefficients of $A(\x + \f(t))$.
Hence, $A(\x + \f(t))$ has $\log(w^2+1)$-concentration. 
A similar argument would work when $A(\x) \in \F[\x]$,
by writing $A = SBT$, for some $S \in \F^{1\times w}$ and $T \in \F^{w\times 1}$.
\end{proof}


\section{Discussion}
The first question is whether one can make the time complexity
for PIT for the sum of~$c$ ROABPs
proportional to~$w^{O(c)}$ instead of~$w^{O(2^c)}$. 
This blow up happens because, when we want to combine $w+1$ partial
derivative polynomials given by ROABPs of  width~$w$, 
we get an ROABP of width~$O(w^2)$. 
There are examples where this bound seems tight. So,
a new property of sum of ROABPs needs to be discovered.

It also needs to be investigated if these ideas can be generalized to
work for sum of more than constantly many ROABPs, or depth-$3$ multilinear circuits.

As mentioned in the introduction, the idea for equivalence of two ROABPs
was inspired from the equivalence of two read once boolean branching programs (OBDD). 
It would be interesting to know if there are concrete connections between
arithmetic and boolean branching programs. 
In particular, can ideas from identity testing of an ROABP be applied
to construct pseudo-randomness for OBDD.
E.g.\ the less investigated model, XOR of constantly many OBDDs
can be checked for unsatisfiability by modifying our techniques.


\section{Acknowledgements}
We thank Manindra Agrawal, Chandan Saha and Vineet Nair for very useful discussions and constant encouragement.
The work was initiated when TT was visiting CSE, IIT Kanpur. 
Part of the work was done during Dagstuhl Seminar 14391 on Algebra in Computational Complexity 2014.
We thank anonymous referees for the useful suggestions. 
\bibliographystyle{alpha}
\bibliography{sumROABP}

\end{document}